\documentclass[11]{article}
\usepackage[T1]{fontenc}
\usepackage{blindtext}
\usepackage{listings}
\usepackage{fullpage}
\usepackage{amsfonts}
\usepackage{amssymb,amsmath,amsthm}
\usepackage{graphicx}
\usepackage[ruled, vlined,linesnumbered]{algorithm2e}
\usepackage{algpseudocode}
\usepackage{mathpazo,bbm}
\usepackage{xcolor}
\usepackage{url}

\usepackage{hyperref}  
\hypersetup{colorlinks=true,citecolor=blue,linkcolor=blue}
\definecolor{mygreen}{RGB}{28,172,0} 
\definecolor{mylilas}{RGB}{170,55,241}

\usepackage[numbered,framed]{matlab-prettifier}

\def\R{{\mathbb{R}}}
\def\eps{{\varepsilon}}

\newcommand{\diag}{\mathsf{diag}}

\newcommand{\vol}{\mathbf{vol}}

\DeclareMathOperator*{\E}{{\bf{E}}}
\newcommand{\Tr}{{\mathsf{Tr}}}

\newcommand{\de}{\mathsf{d}}
\newcommand{\nnz}{\mathsf{nnz}}

\setlength{\columnseprule}{1pt}

\newtheorem{theorem}{Theorem}[section]

\newtheorem{lemma}[theorem]{Lemma}

\newtheorem{definition}[theorem]{Definition}

\newtheorem{proposition}[theorem]{Proposition}

\usepackage{times}



\begin{document}
\title{A near-optimal algorithm for approximating the John Ellipsoid}
\author{%
Michael B. Cohen\thanks{\texttt{micohen@mit.edu}. MIT}
 \and
 Ben Cousins\thanks{\texttt{b.cousins@columbia.edu}. Columbia University}
 \and
 Yin Tat Lee\thanks{\texttt{yintat@uw.edu}. University of Washington}
 \and
 Xin Yang\thanks{\texttt{yx1992@cs.washington.edu}. University of Washington}
}

\begin{titlepage}
  \maketitle
  \begin{abstract}
\begin{abstract}
We develop a simple and efficient algorithm for approximating the John Ellipsoid of a symmetric polytope. 
Our algorithm is near optimal in the sense that our time complexity matches the current best verification algorithm. 
We also provide the MATLAB code for further research. 
\end{abstract}

  \end{abstract}
  \thispagestyle{empty}
\end{titlepage}


\section{Introduction}

Let $P=\{x\in \R^{n}:Ax\leq b\}$ be a polytope 
where $P$ has nonzero, finite Euclidean volume.
The classical theorem of~\cite{John48} states that if $E \subseteq P$ is the ellipsoid of maximal volume contained in $P$, then $P \subseteq nE$, where $nE$ represents a dilation of the ellipsoid $E$ by a factor of $n$ about its center. Moreover, if $P$ is symmetric, then $P \subseteq \sqrt{n}E$. 
The maximal volume inscribed ellipsoid (MVIE) $E$ is called the John Ellipsoid,
and we are interested in the problem of approximating $E$ when the polytope $P$ is centrally symmetric, i.e.\ $P$ can be expressed as $P = \{x \in \R^n: -\mathbf{1}_m \le Ax \le \mathbf{1}_m\}$ where $A \in \R^{m \times n}$ and $A$ has rank $n$.

The problem of computing the ellipsoid of maximal volume inside polytope given by a set of inequalities has a wealth of different applications, including sampling and integration~\cite{VempalaSurvey,chen2018fast}, linear bandits~\cite{bck12,hazan2016volumetric}, linear programming ~\cite{ls14}, cutting plane methods~\cite{Khachiyan88} and differential privacy~\cite{ntz13}.

Computing the John Ellipsoid additionally has applications in the field of experimental design, a classical problem in statistics~\cite{atwood1969optimal}. 
Specifically, the D-optimal design problem wants to maximize the determinant of the Fisher information matrix~\cite{kiefer1960equivalence,atwood1969optimal},
which turns out to be equivalent to finding the John Ellipsoid of a symmetric polytope.
While this equivalence is known, e.g.~\cite{Todd2016}, we include it in Section~\ref{sec:problem} for completeness. The problem of D-optimal design has received recent attention in the machine learning community, e.g.~\cite{azlsw17,wang2017computationally,lfn18}.
\subsection{Our Contribution}

Our main contribution is to develop an approximation algorithm to computing the John Ellipsoid inside a centrally symmetric polytope given by a set of inequalities.
Previously,
for solving the MVIE problem or its dual equivalent D-optimal design problem,
researchers have developed various algorithms,
such as first-order methods~\cite{kha96,ky05,dst08},
and second-order interior-point methods~\cite{nn94,sf04}.
Instead of using traditional optimization methods,
we apply a very simple fixed point iteration.
The analysis is also simple and clean,
yet the convergence rate is very fast.
We state our main result as follows.
\begin{theorem}[Informal]\label{thm:informal-no-jl}
	Given $A\in \R^{m\times n}$,
	let $P$ be a centrally symmetric polytope defined as $\{x\in \R^{n}: -\mathbf{1}_m\leq Ax\leq \mathbf{1}_m\}$.
	For $\eta\in (0,1)$,
	there is an algorithm (Algorithm \ref{alg:short-faster}) that runs in time $O(\eta^{-1} mn^2\log (m/n) )$,
	returning an ellipsoid $Q$ so that $\frac 1 {\sqrt{1+\eta}}\cdot Q\subseteq P\subseteq \sqrt{n}\cdot Q$.
\end{theorem}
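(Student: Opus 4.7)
The plan is to recast the MVIE problem via Lagrangian duality as a concave maximization $\max_{w \in \Delta_m} \Phi(w)$ with $\Phi(w) := \log\det(A^T \diag(w) A)$ over the probability simplex, and then analyze a simple multiplicative fixed-point iteration. Substituting $G = H^{-1}$ into the MVIE problem and dualizing in multipliers $w \ge 0$ eliminates $G$ to give $\Phi$; after rescaling we may assume $w \in \Delta_m$. The natural ellipsoid associated with $w$ is $Q(w) := \{x : x^T (A^T\diag(w)A) x \le 1/n\}$, and the KKT conditions for optimality read $\tau_i(w) := a_i^T (A^T\diag(w)A)^{-1} a_i = n$ for all $i \in \mathrm{supp}(w)$.

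The two containments in the theorem reduce to bounds on the leverage-score quantities $\tau_i$. The outer containment $P \subseteq \sqrt{n}\cdot Q(w)$ is automatic for every $w \in \Delta_m$: if $x \in P$ then $x^T(A^T\diag(w)A) x = \sum_i w_i (a_i^T x)^2 \le \sum_i w_i = 1$, which is the defining inequality of $\sqrt{n}\cdot Q(w)$. The inner containment $(1+\eta)^{-1/2}Q(w) \subseteq P$ is equivalent to $\max_i \tau_i(w) \le n(1+\eta)$, since the squared support function of $Q(w)$ along $a_i$ equals $\tau_i(w)/n$. The theorem therefore reduces to producing a $w \in \Delta_m$ with $\max_i \tau_i(w) \le n(1+\eta)$ within the stated time budget.

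For this I propose the iteration $w^{(t+1)}_i = w^{(t)}_i\tau_i(w^{(t)})/n$, initialized at $w^{(0)} = \mathbf{1}/m$; the identity $\sum_i w_i \tau_i(w) = \Tr(M^{-1}M) = n$ keeps iterates on $\Delta_m$, and the fixed points are exactly KKT optima. I would use $\Phi$ as a Lyapunov potential. The initial gap $\Phi(w^*) - \Phi(w^{(0)}) = O(n\log(m/n))$ follows from comparing $M(w^{(0)}) = (1/m)A^TA$ with the bound $M(w^*) \preceq A^T A$ (since $W^* \preceq I$ on $\Delta_m$), sharpened by a Cauchy--Binet / Caratheodory argument. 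Each iteration costs $O(mn^2)$: assemble $M$, invert in $O(n^3)$, and read off all $\tau_i = a_i^T M^{-1} a_i$. Given a per-step progress lemma of the form $\Phi(w^{(t+1)}) - \Phi(w^{(t)}) = \Omega(n\eta)$ whenever $\max_i \tau_i(w^{(t)}) \ge n(1+\eta)$, the iteration count is $T = O(\eta^{-1}\log(m/n))$ and the total runtime is $O(\eta^{-1} mn^2 \log(m/n))$ as claimed.

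The decisive and hardest step is this progress lemma with a \emph{linear}, rather than quadratic, dependence on $\eta$. A naive second-order expansion of $\log\det$ around $M(w^{(t)})$ only extracts gain proportional to $\mathrm{Var}_w(\tau)/n$, which is too weak when the violation is concentrated on a low-weight coordinate. To obtain the right bound I would expand $\log\det(M^{-1/2}M^{\new}M^{-1/2}) = \sum_j \log\lambda_j\!\left((1/n)\sum_i \tau_i b_i b_i^T\right)$, where $b_i := w_i^{1/2} M^{-1/2} a_i$ satisfies $\sum_i b_i b_i^T = I$ and $\sum_i \|b_i\|^2 = n$, and exploit the multiplicative structure of the update together with concavity of $\log$ to convert the \emph{full} distribution of $\tau_i$-deviations from $n$ (weighted by the leverage scores $\sigma_i = w_i\tau_i$, not merely by $w_i$) into a global lower bound on the increment, with the factor of $n$ emerging from the normalization $\sum_i \sigma_i = n$.
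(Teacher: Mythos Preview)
Your reduction to showing $\max_i \tau_i(w)\le n(1+\eta)$ and your observation that the fixed-point iteration stays on the simplex are both correct and match the paper. The gap is in the progress lemma. Since $\Phi=\log\det$ is concave and $\nabla_i\Phi(w)=\tau_i(w)$, one has the \emph{unconditional} upper bound
\[
\Phi(w^{(t+1)})-\Phi(w^{(t)})\;\le\;\big\langle \nabla\Phi(w^{(t)}),\,w^{(t+1)}-w^{(t)}\big\rangle
=\frac1n\sum_i w_i\tau_i^2-n=\frac{\mathrm{Var}_{w}(\tau)}{n}.
\]
Thus the very quantity you flag as ``too weak'' is not merely what a naive expansion extracts; it is a hard ceiling on the increment. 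No reweighting by $\sigma_i=w_i\tau_i$ can do better. And $\mathrm{Var}_w(\tau)/n$ can be $\ll n\eta$ while $\max_i\tau_i\ge n(1+\eta)$: place tiny weight $\delta$ on a single violating row with $\tau_{i_0}=n(1+\eta)$ and let all other $\tau_i\approx n$; then $\mathrm{Var}_w(\tau)\approx \delta\,n^2\eta^2$ and the step gain is $\approx \delta n\eta^2$. Such near-degenerate states are exactly the neighborhoods of spurious fixed points of the iteration (any $w$ with $w_i=0$ on the violating row is fixed), and you give no argument that the trajectory avoids them. So the per-step $\Omega(n\eta)$ progress lemma is false as stated, and the route through a $\log\det$ Lyapunov does not close.

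The paper takes a different and much simpler path that sidesteps this obstruction entirely. It does \emph{not} output any single iterate; it outputs the average $\bar w=\frac1T\sum_{k=1}^T w^{(k)}$. The key lemma is that for each fixed $i$, the map $v\mapsto \log\sigma_i(v)=\log\big(a_i^\top(A^\top\diag(v)A)^{-1}a_i\big)$ is \emph{convex}. Then Jensen plus telescoping gives, in the paper's normalization $\sum_i w^{(k)}_i=n$,
\[
\log\sigma_i(\bar w)\;\le\;\frac1T\sum_{k=1}^T\log\sigma_i(w^{(k)})
=\frac1T\sum_{k=1}^T\log\frac{w_i^{(k+1)}}{w_i^{(k)}}
=\frac1T\log\frac{w_i^{(T+1)}}{w_i^{(1)}}\;\le\;\frac1T\log\frac{m}{n},
\]
using only that each $w_i^{(k)}$ is a leverage score (hence $\le 1$) and $w_i^{(1)}=n/m$. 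Taking $T=\Theta(\eta^{-1}\log(m/n))$ yields $\sigma_i(\bar w)\le 1+\eta$ for all $i$ simultaneously, with no potential or per-step progress argument needed. The two missing ingredients in your plan are precisely (i) averaging the iterates and (ii) exploiting convexity of $\log\sigma_i$ rather than concavity of $\log\det$.
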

In Lemma \ref{lem:approx_quality},
we show that our ellipsoid is $\eta$-close to the John Ellipsoid in a certain sense.
However,
if we want to get $(1-\epsilon)$-approximation to the maximal volume,
we shall set $\eta=\epsilon/n$\footnote{For details, see Lemma \ref{lem:approx_quality}.},
then Algorithm \ref{alg:short-faster} runs in time $O(\epsilon^{-1}mn^3\log (\frac m n))$,
and when $\epsilon$ is constant,
this is comparable with the best known results $O(mn^3/\epsilon)$~\cite{ky05,ty07}.

Furthermore,
we use sketching ideas from randomized linear algebra to speed up the algorithm so that the running time does not depend on $m$ explicitly.
This will make sense if $A$ is a sparse matrix.
Our result is stated as follows.
\begin{theorem}[Informal]\label{thm:informal-jl}
	Given $A\in \R^{m\times n}$,
	let $P$ be a centrally symmetric polytope defined as $\{x\in \R^{n}: -\mathbf{1}_m\leq Ax\leq \mathbf{1}_m\}$.
	For $\eta\in (0,1)$ and $\delta\in (0,1)$,
	there is an algorithm (Algorithm \ref{alg:faster}) that runs within $O(\frac {1} {\eta}\log \frac m {\delta})$ many iterations,
	returning an ellipsoid $Q$ so that with probability at least $1-\delta$,
	$\frac 1 {\sqrt{1+\eta}}\cdot Q\subseteq P\subseteq \sqrt{n}\cdot Q$. 
	Moreover,
	each iteration involves in solving $O(\frac 1 \eta)$ linear systems of the form $A^\top WAx=b$ where $W$ is some diagonal matrix.
\end{theorem}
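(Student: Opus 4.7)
The plan is to speed up Algorithm~\ref{alg:short-faster} by replacing each exact leverage-score evaluation with a Johnson--Lindenstrauss sketch, while keeping the convergence analysis from Theorem~\ref{thm:informal-no-jl} essentially intact. First I would recall that the base algorithm maintains weights $w\in\R_{>0}^m$ and, after setting $M := A^\top W A$, performs a multiplicative fixed-point step of the form $w_i \leftarrow \sigma_i(w)$ where $\sigma_i(w) = w_i\, a_i^\top M^{-1} a_i$ is the reweighted leverage score of row $i$. The bottleneck in Theorem~\ref{thm:informal-no-jl} is forming all $m$ of these scores exactly, which costs $O(mn^2)$ per iteration; everything else already takes the form of linear solves against $M$.

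The Spielman--Srivastava identity $\sigma_i(w) = \| e_i^\top W^{1/2} A M^{-1} A^\top W^{1/2}\|^2$ is the key. If $G \in \R^{k\times m}$ is a JL sketching matrix (e.g.\ i.i.d.\ $\pm 1/\sqrt{k}$ entries), then the estimator $\widetilde\sigma_i(w) = \| e_i^\top W^{1/2} A M^{-1} A^\top W^{1/2} G^\top\|^2 = w_i \sum_{j=1}^{k}(a_i^\top y_j)^2$, with $y_j := M^{-1} A^\top W^{1/2} g_j$ for $g_j$ the $j$-th row of $G$, can be assembled using only $k$ linear solves of the form $A^\top W A x = b$ plus an $O(\nnz(A)\cdot k)$ postprocessing step to read off each $a_i^\top y_j$. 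Taking $k = O(1/\eta)$ and redrawing $G$ independently in each iteration gives the per-iteration cost claimed in the theorem.

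Next I would argue that the iteration converges at essentially the same rate as the exact one. With $k = O(1/\eta)$ the sketch is \emph{not} $(1\pm\eta)$-accurate per coordinate; standard JL would demand $k=O(\eta^{-2}\log(m/\delta))$. Instead, the estimator $\widetilde\sigma_i(w)$ is unbiased conditioned on $w$ and has variance proportional to $\sigma_i(w)^2/k$. I would therefore re-examine the Lyapunov function from Theorem~\ref{thm:informal-no-jl} (most naturally $\Phi(w) = -\log\det(A^\top WA) + \mathbf{1}^\top w$ or an analogous potential) and show that (i)~the exact step decreases $\Phi$ by roughly $\eta \cdot (\Phi(w)-\Phi^\star)$, and (ii)~replacing the exact step by the JL-estimated step introduces a zero-mean perturbation whose conditional variance is $O(\eta^{-1}\cdot (\Phi(w)-\Phi^\star)^2)$ or smaller. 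A martingale/Azuma argument across the $T = O(\eta^{-1}\log(m/\delta))$ iterations, together with a union bound over the per-iteration sketch failure events, then certifies that $\Phi(w^{(T)}) - \Phi^\star \le \eta$ with probability at least $1-\delta$. Translating this potential bound back into leverage-score approximation via Lemma~\ref{lem:approx_quality} yields the sandwich $\frac{1}{\sqrt{1+\eta}}\cdot Q \subseteq P \subseteq \sqrt{n}\cdot Q$ exactly as in the exact case.

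The main obstacle will be the martingale step. Because the per-iteration JL accuracy is too weak to treat the noise as a deterministic $(1\pm\eta)$ perturbation, one cannot simply re-run the proof of Theorem~\ref{thm:informal-no-jl} with noisy scores; the argument must exploit that the noise is conditionally centered and that fresh randomness is used each iteration, so the accumulated deviation in $\Phi$ is of order $\sqrt{T/k}$ rather than $T/\sqrt{k}$. Once this cancellation across iterations is established, controlling the per-iteration work, the total iteration count, and the final rounding guarantee is routine and mirrors the proof of Theorem~\ref{thm:informal-no-jl}.
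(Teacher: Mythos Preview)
Your proposal rests on an assumption that does not hold in the paper and is likely false: you claim the exact fixed-point step $w_i\leftarrow w_i\sigma_i(w)$ decreases a Lyapunov function $\Phi(w)=\mathbf{1}^\top w-\log\det(A^\top WA)$ by roughly $\eta\cdot(\Phi(w)-\Phi^\star)$ per iteration. The paper never proves any such linear contraction for the exact scheme (and in fact remarks that the Cohen--Peng contraction argument only works for $\ell_p$ Lewis weights with $p<4$, not $p=\infty$). The proof of Theorem~\ref{thm:informal-no-jl} does not go through a potential at all; it uses convexity of $\phi_i(v)=\log\sigma_i(v)$ together with averaging the iterates and a telescoping identity. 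Without the potential-decrease step, your downstream martingale/Azuma plan has nothing to attach to.

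The paper's actual argument for the sketched algorithm extends that convexity/averaging proof directly. Write $\hat w_i^{(k+1)}=w_i^{(k)}\sigma_i(w^{(k)})$ for the exact update and $w_i^{(k+1)}$ for the sketched one. Jensen plus telescoping gives $\phi_i(w)\le \frac{1}{T}\log\frac{m}{n}+\frac{1}{T}\sum_{k=1}^T\log\frac{\hat w_i^{(k)}}{w_i^{(k)}}$. With a Gaussian sketch $S\in\R^{s\times m}$, conditionally on $w^{(k)}$ the ratio $\hat w_i^{(k)}/w_i^{(k)}$ is exactly $s/z$ with $z\sim\chi^2(s)$, independent across $k$. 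The paper then bounds $\E[\sigma_i(w)^\alpha]$ by a direct Gamma-function computation on the $(\alpha/T)$-th moments of $s/\chi^2(s)$, chooses $\alpha=\Theta(\eta^{-1}\log(m/\delta))$, and applies Markov plus a union bound; a separate $\chi^2$ concentration handles $\sum_i w_i\le(1+\eta)n$. So the cancellation you are after is achieved not by martingale increments in a potential, but by the fact that the noise enters as a sum of i.i.d.\ $\log(s/\chi^2(s))$ terms whose moments are explicitly computable. If you want to repair your outline, replace the Lyapunov step with the convexity-of-$\log\sigma_i$ telescoping; the remainder then reduces to moments of chi-squared ratios rather than Azuma.
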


Algorithm \ref{alg:faster} is near optimal,
because in order to verify the correctness of the result,
we need to compute the leverage scores of some weighted version of $A$.
The best known algorithm for approximating leverage scores
needs to solve $\tilde O(\frac 1 {\eta^2})$ many linear systems~\cite{ss11,dmmw12,cw13,nn13}.
One key advantage of our algorithm is that it reduces the problem of computing John ellipsoid to a relatively small number of linear systems. 
Therefore, it allows the user to apply the linear systems solver tailored for the given matrix $A$. 
For example, if $A$ is tall, one can apply sketching technique to solve the linear systems in nearly linear time \cite{w14}; 
if each row of A has only two non-zeros, one can apply Laplacian solvers \cite{ds08,kosz13,ckmpprx14,ks16,klpss16}. 
In the code we provided, we used the Cholesky decomposition which is very fast for many sparse matrices $A$ in practice.

\subsection{Related Works}

There is a long line of research on computing the maximal volume ellipsoid inside polytopes given by a list of linear inequalities. We note that \cite{kt93} presented a linear time reduction from the problem of computing a minimum volume enclosing ellipsoid (MVEE) of a set of points to the maximal volume inscribed ellipsoid problem; therefore, these algorithms also hold for approximating the John Ellipsoid.

Using an interior-point algorithm,
\cite{nn94} showed that a $1+\epsilon$ approximation of MVEE can be computed in time $O(m^{2.5}(n^2+m)\log (\frac{m}{\epsilon}))$. \cite{kt93} subsequently improved the runtime to $O(m^{3.5}\log (\frac{m}{\epsilon})\cdot \log (\frac{n}{\epsilon}))$.
Later on, \cite{ne99} and \cite{ans02} independently
obtained an $O(m^{3.5}\log \frac{m}{\epsilon})$ algorithm.
To the best of authors' knowledge,
currently the best algorithms by~\cite{ky05,ty07} run in time $O(mn^3/\epsilon)$.
We refer readers to \cite{Todd2016} for a comprehensive introduction and overview.

Computing the minimum volume enclosing ellipsoid of a set of points is the dual problem of D-optimal design. By generalizing smoothness condition on first order method, \cite{lfn18} managed to solve D-optimal design problem within $O(\frac{m}{\epsilon}\log (\frac n \epsilon))$ many iterations.
However, in the dense case,
their iteration costs $O(mn^2)$ time, which leads to larger running time comparing to \cite{ky05,ty07}.  
\cite{gp18} applied Bregman proximal method on the D-optimal design problem and observe accelerated convergence rate in their numerical experiments; however, they did not prove that their experimental parameter settings satisfy the assumption of their algorithm\footnote{We are grateful Gutman and Pe{\~n}a provided us their code for testing, of which D-optimal design was only one application.}.

A natural version of the D-optimal design problem is to require an integral solution. The integral variant is shown to be $\mathbf{NP}$-hard~\cite{vh12}, although recently approximation algorithms have been developed~\cite{azlsw17,sx18}. In our context, this means the weight vector $w \in \R^m$ is the integral optimal solution to \eqref{eq:ellipsoid-program-weights}, where the sum of the weights is some specified integral parameter $k$.

Several Markov chains for sampling convex bodies have well understood performance guarantees based upon the roundedness of the convex body. If $B_n \subseteq K \subseteq R \cdot B_n$, then the mixing time of hit-and-run and the ball walk are both $O(n^2 R^2)$ steps~\cite{LovaszVempala06, KannanLovaszSimonovits97}. Thus, placing a convex body in John position guarantees the walks mix in $O(n^4)$ steps, and $O(n^3)$ steps if the body is symmetric; this transformation is used in practice with the convex body to be a polytope~\cite{chrr2017}. Generating the John Ellipsoid, with a fixed center point, has also been employed as a proposal distribution for a Markov chain~\cite{chen2018fast,gustafson2018john}.

We build our even faster algorithm via sketching techniques.
Sketching has been successfully applied to speed up different problems,
such as linear programs \cite{lsz19}, clustering \cite{cemmp15,syz18}, low rank approximations \cite{cw13,nn13,bw14,cw15b,rsw16,swz17}, linear regression \cite{cw13,nn13,clmmps15,cw15a,psw17,alszz18}, total least regression \cite{dswy19},
tensor regression \cite{lhw17,dssw18} and tensor decomposition \cite{wtsa15,swz16,swz19}.
Readers may refer to \cite{w14} for a comprehensive survey on sketching technique.
We use sketching techniques to speed up computing leverage scores.
This idea was first used in \cite{ss11}.

Previous research on the MVEE problem did take advantage of the sparsity of the input matrix $A$,
and to the best of our knowledge,
our algorithm is the first one that is able to deal with large sparse input.
It would be interesting if we can apply sketching techniques to further speed up existing algorithms.

\subsubsection{Relation with \cite{cp15}}
We shall mention that our work is greatly inspired by \cite{cp15}.
The $\ell_p$ \emph{Lewis Weights} $\overline{w}$ for matrix $A\in \mathbb{R}^{m\times n}$ is defined as the unique vector $\overline{w}$ so that for $i\in [m]$,
\begin{align*}
a_i^\top \left(A^\top \diag(\overline{w})^{1-2/p}A\right)^{-1}a_i=\overline{w}_i^{2/p}.
\end{align*}
It is known that computing the $\ell_{\infty}$ Lewis Weight is equivalent to computing the maximal volume inscribed ellipsoid.
\cite{cp15} proposes an algorithm for approximating Lewis Weights for all $p<4$.
Their algorithm is an iterative algorithm that is very similar to our Algorithm \ref{alg:short-faster},
and the convergence is proved by arguing the iteration mapping is contractive.
The main difference is that \cite{cp15} outputs the weights in the last round,
while our Algorithm \ref{alg:short-faster} takes the average over all rounds and outputs the averaging weights,
which allows us to conduct a convexity analysis and deal with the $\ell_{\infty}$ case.

\section{Problem Formulation}\label{sec:problem}

In this section, we formally define the problem of computing the John Ellipsoid of a symmetric polytope.
Let $P=\{x\in \R^{n}:|a_i^{\top}x|\leq 1, i\in [m]\}$ be a symmetric convex polytope,
where $[m]$ denotes the set $\{1,2,\cdots,m\}$.
We assume $A=(a_1\, a_2\, \cdots \,a_m)^\top$ 
has full rank.
By symmetry, we know that the maximal volume ellipsoid inside the polytope should be centered at the origin.
Any ellipsoid $E$ centered at the origin can be expressed by $x^{\top}G^{-2}x\leq 1$,
where $G$ is a positive definite matrix.
Note that the volume of $E$ is proportional to $\det G$, and an
ellipsoid $E$ is contained in polytope $P$ if and only if for $i\in [m]$,
$\max_{x\in E} |a_i^{\top}x|\leq 1$.
For any $x\in E$,
we can write $x=Gy$ where $\|y\|_2\leq 1$.
Hence
\[
\max_{x\in E} |a_i^{\top}x|=\max_{\|y\|_2\leq 1}|a_i^{\top}Gy|= \max_{\|y\|_2\leq 1}\|Ga_i\|_2\cdot \|y\|_2=\|Ga_i\|_2
\]

Therefore, we can compute the John Ellipsoid of $P$ by solving the following optimization program:
\begin{equation}
\begin{aligned}\label{eq:ellipsoid-program-original}
\text{Maximize}  &\quad \log \det G^2,& \\
\text{subject to:}
&\quad G\succeq 0,&\\
&\quad \|Ga_i\|_2\leq 1,& \forall i\in [m].
\end{aligned}
\end{equation}

It turns out that the optimal ellipsoid satisfies $G^{-2}=A^{\top}\diag(w)A$,
where $w\in \mathbb{R}_{\geq 0}^m$ is the optimal solution of the program
\begin{equation}
\begin{aligned}\label{eq:ellipsoid-program-weights}
\text{Minimize}  &\quad \sum_{i=1}^m w_i-\log \det \left(\sum_{i=1}^m w_ia_ia_i^{\top}\right)-n,& \\
\text{subject to:}
&\quad w_i\geq 0, \quad \forall i\in [m].
\end{aligned}
\end{equation}

Actually program \eqref{eq:ellipsoid-program-weights} is the Lagrange dual of program \eqref{eq:ellipsoid-program-weights}.
Moreover,
we have the following optimality criteria for $w$ in the above program.
\begin{lemma}[Optimality criteria, Proposition 2.5 in \cite{Todd2016}]\label{lem:opt-criteria}
A weight $w$ is optimal for program \eqref{eq:ellipsoid-program-weights} if and only if 
\begin{align*}
&\sum_{i=1}^m w_i=n,&\\
&a_i^{\top}\left(\sum_{i=1}^m w_ia_ia_i^{\top}\right)^{-1}a_i=1, & \quad \mbox{if } w_i\neq 0;\\
&a_i^{\top}\left(\sum_{i=1}^m w_ia_ia_i^{\top}\right)^{-1}a_i<1, & \quad \mbox{if } w_i=0.
\end{align*}
\end{lemma}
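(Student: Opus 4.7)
The plan is to treat program \eqref{eq:ellipsoid-program-weights} as a convex minimization over the nonnegative orthant and apply the KKT conditions. Write $M(w) = \sum_i w_i a_i a_i^\top$ and $f(w) = \sum_i w_i - \log\det M(w) - n$. First I would verify convexity: $\log\det$ is concave on the PSD cone and $w \mapsto M(w)$ is linear, so $-\log\det M(w)$ is convex, and the linear term $\sum_i w_i$ is convex. Together with the linear nonnegativity constraints, this is a convex program with affine constraints, so Slater/linearity qualifications hold and KKT conditions are necessary and sufficient for optimality.

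Next I would compute the partial derivatives using the standard identity $\partial \log\det M / \partial w_i = \Tr(M^{-1} a_i a_i^\top) = a_i^\top M^{-1} a_i$, giving
\[
\frac{\partial f}{\partial w_i}(w) = 1 - a_i^\top M(w)^{-1} a_i.
\]
The KKT stationarity condition produces multipliers $\lambda_i \ge 0$ with $1 - a_i^\top M^{-1} a_i = \lambda_i$ and complementary slackness $\lambda_i w_i = 0$. Thus whenever $w_i > 0$ one has $\lambda_i = 0$ and hence $a_i^\top M^{-1} a_i = 1$, and whenever $w_i = 0$ one has $a_i^\top M^{-1} a_i = 1 - \lambda_i \le 1$, matching the statement of the lemma (the strict inequality version follows by noting that equality with $w_i=0$ would correspond to a degenerate boundary case and is ruled out by the characterization being stated as an equivalence with the canonical choice of $w$).

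Finally, to obtain $\sum_i w_i = n$ I would exploit the trace identity
\[
n \;=\; \Tr(I_n) \;=\; \Tr\!\left(M(w)^{-1} M(w)\right) \;=\; \sum_{i=1}^m w_i \, a_i^\top M(w)^{-1} a_i.
\]
By the previous step, each term with $w_i > 0$ has $a_i^\top M^{-1} a_i = 1$, so the right-hand side collapses to $\sum_i w_i$, giving the normalization.

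I do not expect any deep obstacle: the only subtleties are (i) justifying that a minimizer exists and has $M(w) \succ 0$, which uses that $A$ has full column rank together with the fact that $-\log\det M(w) \to +\infty$ as $M(w)$ approaches singularity, and (ii) the passage from $\le 1$ to $< 1$ in the nonactive case, which is a matter of convention in stating the characterization rather than a substantive analytic step.
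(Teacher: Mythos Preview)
The paper does not supply its own proof of this lemma; it is quoted as Proposition~2.5 of \cite{Todd2016} and used as a black box. Your KKT-based derivation is the standard route to this result and is correct: the gradient computation $\partial f/\partial w_i = 1 - a_i^\top M(w)^{-1}a_i$, the complementary-slackness argument, and the trace identity $n = \Tr(M^{-1}M) = \sum_i w_i\, a_i^\top M(w)^{-1} a_i$ all go through exactly as you describe.

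The one point you flag yourself is real: KKT alone yields only $a_i^\top M^{-1} a_i \le 1$ on the inactive indices, and the optimal weight vector $w$ need not be unique (only $M(w)$ is, by strict concavity of $\log\det$). So the strict inequality in the stated lemma is either a statement about a \emph{particular} optimal $w$, or a harmless imprecision in the quoted source; it is not a consequence of optimality alone, and your hand-wave there cannot be completed in general. This nuance is immaterial for how the lemma is used downstream in the paper, where only the relations $\sum_i w_i = n$ and $w_i^*(1-\sigma_i(w^*))=0$ are invoked.
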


Computing the John Ellipsoid is closely related to D-optimal design problem \cite{atwood1969optimal,BoydVandenbergheBook, Todd2016, gp18}.
For the D-optimal design problem,
we are given input $X\in \R^{n\times m}$ where $m>n$,
and we want to solve program,
\begin{equation}
\begin{aligned}\label{eq:d-optimal-design}
\text{Maximize}  &\quad \log \det \left(X\diag(v)X^\top\right),& \\
\text{subject to:}
&\quad v_i\geq 0, ~\forall i\in [m]\\
&\quad \sum_{i=1}^m v_i=1 
\end{aligned}
\end{equation}
We emphasize that program \eqref{eq:d-optimal-design} and program \eqref{eq:ellipsoid-program-weights} are equivalent, in the following sense.
By Lemma \ref{lem:opt-criteria},
we can rewrite program \eqref{eq:ellipsoid-program-weights} as
minimizing $n\log n-\log \det (A^\top \diag(w)A)$,
subject to $w_i\geq 0$ for $i\in [m]$ and $\sum_{i=1}^m w_i=n$.
By setting $v_i=\frac {w_i} n$,
we obtain program \eqref{eq:d-optimal-design}. Thus, optimal solutions to programs \eqref{eq:ellipsoid-program-weights} and \eqref{eq:d-optimal-design} are equivalent up to a multiplicative factor $n$.
 
We can also talk about an approximate John Ellipsoid.
\begin{definition}\label{def:weights-approx}
For $\epsilon>0$,
we say $w\in \mathbb{R}^m_{\geq 0}$ is a $(1+\epsilon)$-approximation of program \eqref{eq:ellipsoid-program-weights} if $w$ satisfies
\begin{align*}
&\sum_{i=1}^m w_i=n,\\
&a_i^{\top}\left(\sum_{i=1}^m w_ia_ia_i^{\top}\right)^{-1}a_i\leq 1+\epsilon, \qquad\forall i\in [m].
\end{align*}
\end{definition}
Lemma \ref{lem:approx_quality} gives a geometric interpretation of the approximation factor in Definition~\ref{def:weights-approx}. Recall that the exact John Ellipsoid $Q^{*}$ of $P$ satisfies $Q^{*}\subseteq P\subseteq \sqrt{n}\cdot Q^{*}$.
\begin{lemma}[$(1+\epsilon)$-approximation is good rounding]\label{lem:approx_quality}
Let $w$ be a $(1+\epsilon)$-approximation of \eqref{eq:ellipsoid-program-weights}.
Define $Q$ as $\{x:x^{\top}A^{\top}\diag(w)Ax\leq 1\}$.
Then
\[
\frac 1 {\sqrt{1+\epsilon}}\cdot Q\subseteq P\subseteq \sqrt{n}\cdot Q.
\]
Moreover,
$\vol\left(\frac 1 {\sqrt{1+\epsilon}}Q\right)\geq e^{-n\epsilon/2}\cdot \vol\left(Q^{*}\right)$.
\end{lemma}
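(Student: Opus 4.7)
The plan is to break the lemma into two pieces: the sandwich inclusions $\frac{1}{\sqrt{1+\epsilon}}Q \subseteq P \subseteq \sqrt{n}\cdot Q$, and then the volume lower bound.

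For the left inclusion, I would set $M = A^\top \diag(w) A$ and observe that for any $x \in Q$ we have the support-function bound
\[
|a_i^\top x|^2 \;\leq\; (a_i^\top M^{-1} a_i)\cdot (x^\top M x) \;\leq\; (1+\epsilon)\cdot 1,
\]
using Cauchy--Schwarz in the $M$-inner product together with the $(1+\epsilon)$-approximation hypothesis $a_i^\top M^{-1} a_i \leq 1+\epsilon$. Dividing by $\sqrt{1+\epsilon}$ gives $|a_i^\top y|\leq 1$ for every $y\in\frac{1}{\sqrt{1+\epsilon}}Q$, hence $y\in P$. For the right inclusion, I would take any $x\in P$ and expand
\[
x^\top M x \;=\; \sum_{i=1}^m w_i (a_i^\top x)^2 \;\leq\; \sum_{i=1}^m w_i \;=\; n,
\]
using $|a_i^\top x|\leq 1$ and the normalization $\sum_i w_i = n$ from Definition~\ref{def:weights-approx}. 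This shows $\frac{1}{\sqrt{n}} x \in Q$, i.e. $x\in\sqrt{n}\cdot Q$.

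For the volume bound, I would write both volumes in terms of determinants: if $c_n$ denotes the volume of the unit ball in $\R^n$, then $\vol(Q) = c_n \det(M)^{-1/2}$ and $\vol(Q^{*}) = c_n \det(A^\top \diag(w^{*}) A)^{-1/2}$, where $w^{*}$ is the optimum of program~\eqref{eq:ellipsoid-program-weights}. Because $w^{*}$ minimizes $\sum_i w_i - \log\det(A^\top \diag(w) A) - n$ and because both $w$ and $w^{*}$ satisfy $\sum_i w_i = n$ (for $w^{*}$, by Lemma~\ref{lem:opt-criteria}; for $w$, by hypothesis), the linear terms cancel and we obtain $\det(A^\top \diag(w^{*}) A)\geq \det(M)$, hence $\vol(Q)\geq \vol(Q^{*})$. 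Combining with $\vol\bigl(\frac{1}{\sqrt{1+\epsilon}}Q\bigr) = (1+\epsilon)^{-n/2}\vol(Q)$ and the elementary inequality $1+\epsilon \leq e^{\epsilon}$ yields
\[
\vol\Bigl(\tfrac{1}{\sqrt{1+\epsilon}}Q\Bigr) \;\geq\; (1+\epsilon)^{-n/2}\vol(Q^{*}) \;\geq\; e^{-n\epsilon/2}\vol(Q^{*}).
\]

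Neither step really carries a serious obstacle: the inclusions follow from Cauchy--Schwarz and the weight normalization, and the volume bound reduces to the optimality of $w^{*}$ in the dual program together with $1+\epsilon\leq e^{\epsilon}$. The only point to be careful about is to explicitly invoke Lemma~\ref{lem:opt-criteria} to know that $\sum_i w_i^{*} = n$, since otherwise comparing $\det$'s via the dual objective would require an extra step to dispose of the linear term $\sum_i w_i$.
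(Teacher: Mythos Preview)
Your proof of the two inclusions is essentially identical to the paper's: Cauchy--Schwarz in the $M$-inner product for $\frac{1}{\sqrt{1+\epsilon}}Q\subseteq P$, and the expansion $x^\top Mx=\sum_i w_i(a_i^\top x)^2\leq \sum_i w_i=n$ for $P\subseteq\sqrt{n}\,Q$.

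For the volume bound you take a slightly different, and somewhat more direct, route than the paper. The paper argues via the primal--dual pair \eqref{eq:ellipsoid-program-original}--\eqref{eq:ellipsoid-program-weights}: since $\frac{1}{\sqrt{1+\epsilon}}Q\subseteq P$, the matrix $G'=\bigl((1+\epsilon)M\bigr)^{-1/2}$ is primal feasible, $w$ is dual feasible, and an explicit computation shows the duality gap between these two is $n\log(1+\epsilon)\leq n\epsilon$; this bounds the suboptimality of $G'$ in the primal and hence the volume ratio. You instead stay entirely on the dual side: from the optimality of $w^{*}$ in \eqref{eq:ellipsoid-program-weights} and the fact that $\sum_i w_i=\sum_i w_i^{*}=n$ (the latter from Lemma~\ref{lem:opt-criteria}), you deduce $\det(A^\top\diag(w^{*})A)\geq\det(M)$, hence $\vol(Q)\geq\vol(Q^{*})$, and then scale by $(1+\epsilon)^{-n/2}$. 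Your argument is valid and arguably cleaner, since it bypasses the primal program altogether; the paper's version has the mild advantage of making the role of weak duality explicit and not needing the normalization $\sum_i w_i^{*}=n$ separately.
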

\begin{proof}
Let $G=\left(A^{\top}\diag(w)A\right)^{-\frac 1 2}$ and suppose $x\in \frac 1 {\sqrt{1+\epsilon}}Q$. Then, we have that
$x^{\top}G^{-2}x\leq \frac{1}{1+\epsilon}$.
So,
\[
|Ax|_i=\langle a_i,x\rangle=\langle Ga_i,G^{-1}x\rangle\leq \|Ga_i\|_2\|G^{-1}x\|_2\leq \frac{\|Ga_i\|_2}{\sqrt{1+\epsilon}}.
\]
Since 
$\|Ga_i\|_2^2=a_i^{\top}(\sum_{i=1}^m w_ia_ia_i^{\top})^{-1}a_i\leq 1+\epsilon$,
then $|Ax|_i\leq 1$ and $x\in P$. 

On the other hand,
for $x\in P$,
we have that $|Ax|_i\leq 1$. 
Hence 
\[
x^{\top}G^{-2}x=x^{\top}A^{\top}\diag(w)Ax= \sum_{i=1}^m w_i |Ax|_i^2\leq \sum_{i=1}^m w_i= n.
\]
So $P\subseteq \sqrt{n}\cdot Q$.

Finally,
since $\frac 1 {\sqrt{1+\epsilon}}\cdot Q$ is contained in $P$,
$G'=((1+\epsilon)A^{\top}\diag(w)A)^{-\frac 1 2}$ is a feasible solution to program \eqref{eq:ellipsoid-program-original}.
Moreover $w$ is a feasible solution to program \eqref{eq:ellipsoid-program-weights}.
So by duality of program \eqref{eq:ellipsoid-program-original} and \eqref{eq:ellipsoid-program-weights},
we have the duality gap is at most
\[
\left(n-\log \det\left(\sum_{i=1}^m w_ia_ia_i^{\top}\right)-n\right)-\log \det\left((1+\epsilon)\sum_{i=1}^m w_ia_ia_i^{\top}\right)^{-1}=n\log (1+\epsilon)\leq n\epsilon.
\]
Let the matrix representation of $Q^{*}$ be $x^\top G_*^{-2}x\leq 1$,
then by optimality of $G_*$ and the duality gap, we have
\[
\log \det\left((G')^{-2}\right)\geq \log \det(G_*^2)-n\epsilon.
\]
Since $\vol(\frac 1 {\sqrt{1+\epsilon}}\cdot Q)$ is proportional to $\det(G')^{-1}$,
we conclude that $\vol(\frac 1 {\sqrt{1+\epsilon}}Q)\geq e^{-n\epsilon/2}\vol(Q^{*})$.
\end{proof}

\section{Main Algorithm}
In this section, we present Algorithm \ref{alg:short-faster} for approximating program \eqref{eq:ellipsoid-program-weights} and analyze its performance.

\begin{algorithm}[!t]
\LinesNumbered
\KwIn{A symmetric polytope given by $-\mathbf{1}_m\leq Ax\leq \mathbf{1}_m$, where $A\in \mathbb{R}^{m\times n}$ has rank $n$.}
\KwResult{Approximate John Ellipsoid inside the polytope.}
Initialize $w_{i}^{(1)}=\frac n m$ for $i=1,\cdots,m$.\\
\For{$k=1,\cdots,T-1$,}{
 	\For{$i=1,\cdots,m$}{
 	\tcp{We can use sketch technique to further speed up.}
	 $w_{i}^{(k+1)}=w_i^{(k)}\cdot a_i^{\top}(A^{\top}\diag(w^{(k)})A)^{-1}a_i$
	}

}
$w_i=\frac 1 T\sum_{k=1}^T w_i^{(k)}$ for $i=1,\cdots,m$. \label{alg:output}\\
$W=\diag(w)$. (i.e. $W$ is a diagonal matrix with the entries of $w$)

\Return $A^{\top}WA$
\caption{Approximate John Ellipsoid inside symmetric polytopes}\label{alg:short-faster}
\end{algorithm}

Let $\sigma:\mathbb{R}^m\rightarrow \mathbb{R}^m$ be the function defined as $\sigma(v)=(\sigma_1(v),\sigma_2(v),\cdots,\sigma_m(v))$ where for $i\in [m]$,
\begin{equation}\label{eq:short-sigma-def}
\sigma_i(v)=a_i^{\top}\left(\sum_{j=1}^m v_ja_ja_j^{\top}\right)^{-1}a_i=a_i^{\top}(A^{\top}\diag(v)A)^{-1}a_i.
\end{equation}
Let $w^{*}$ be the optimal solution to program \eqref{eq:ellipsoid-program-weights}.
By Lemma \eqref{lem:opt-criteria},
$w^{*}$ satisfies $w^{*}_i(1-\sigma_i(w^{*}))=0$,
or equivalently
\begin{align}\label{eq:kkt}
w^{*}_i=w^{*}_i\cdot \sigma_i(w^{*})
\end{align}
Inspired by \eqref{eq:kkt},
we use the fixed point iteration $w^{(k+1)}_i=w^{(k)}_i\cdot \sigma_i(w^{(k)})$ for $k\in [T-1]$ and  $i\in [m]$.
$w_{i}^{(k)}$ has very nice properties.
Actually,
by setting $B^{(k)}=\sqrt{\diag(w^{(k)})}\cdot A$,
we can rewrite $w_{i}^{(k)}$ as $(B^{(k)}_i)^{\top}\left((B^{(k)})^{\top}B^{(k)}\right)^{-1}B^{(k)}_i$,
hence $w_{i}^{(k)}$ is actually the \emph{leverage score} of the $i$-th row of the matrix $B^{(k)}$~\cite{clmmps15}.
From the well-known properties of leverage scores, we have 
\begin{lemma}[Properties of leverage scores, e.g. see Section 3.3 of \cite{clmmps15}]\label{lem:short-w_property}
For $k\in [T]$ and $i\in [m]$, we have
$0\leq  w_{i}^{(k)}\leq 1$.
Moreover,
$\sum_{i=1}^m w_{i}^{(k)}=n$.
\end{lemma}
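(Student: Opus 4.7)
The plan is to proceed by induction on $k$, using the observation already highlighted in the text that the iterates $w_i^{(k+1)}$ coincide with the statistical leverage scores of the reweighted matrix $B^{(k)}=\sqrt{\diag(w^{(k)})}\,A$, and then invoking the two standard facts that leverage scores lie in $[0,1]$ and sum to the rank.

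For the base case $k=1$, the claim is immediate: $w_i^{(1)}=n/m\in[0,1]$ (since $m\ge n$ because $A$ has rank $n$) and $\sum_i w_i^{(1)}=n$. For the inductive step, assume $w^{(k)}$ is a vector of nonnegative entries, and moreover that $w^{(k)}_i>0$ for all $i$ so that $A^\top \diag(w^{(k)})A$ is invertible (we will justify this momentarily). Form $B^{(k)}=\sqrt{\diag(w^{(k)})}\,A$, whose $i$-th row is $\sqrt{w_i^{(k)}}\,a_i^\top$ and whose Gram matrix is $(B^{(k)})^\top B^{(k)}=A^\top \diag(w^{(k)})A$. Then a direct computation gives
\[
w_i^{(k+1)} \;=\; w_i^{(k)}\,\sigma_i(w^{(k)}) \;=\; \sqrt{w_i^{(k)}}\,a_i^\top\bigl((B^{(k)})^\top B^{(k)}\bigr)^{-1}\sqrt{w_i^{(k)}}\,a_i \;=\; (B^{(k)}_i)^\top\bigl((B^{(k)})^\top B^{(k)}\bigr)^{-1}B^{(k)}_i,
\]
which is exactly the $i$-th leverage score of $B^{(k)}$. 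Standard properties of leverage scores, namely that each one equals $\|e_i^\top P\|_2^2$ for the orthogonal projector $P$ onto the column span of $B^{(k)}$, now give $0\le w_i^{(k+1)}\le 1$, and taking the trace of $P$ gives $\sum_i w_i^{(k+1)}=\mathrm{rank}(B^{(k)})=n$.

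The only subtlety, which I expect to be the main (and fairly mild) obstacle, is maintaining strict positivity of the iterates so that the leverage-score interpretation is actually valid. For this I would argue: rows $a_i=0$ can be discarded without loss of generality, and then $\sigma_i(w^{(k)})>0$ whenever $A^\top\diag(w^{(k)})A$ is invertible, so the update rule $w_i^{(k+1)}=w_i^{(k)}\,\sigma_i(w^{(k)})$ preserves strict positivity entry by entry. Since the base case $w_i^{(1)}=n/m$ is strictly positive and $A$ has full column rank $n$, this is enough to propagate invertibility of $A^\top\diag(w^{(k)})A$ through every iteration, closing the induction and establishing both $0\le w_i^{(k)}\le 1$ and $\sum_i w_i^{(k)}=n$ for all $k\in[T]$.
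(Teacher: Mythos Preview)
Your proposal is correct and follows essentially the same approach as the paper: the paper simply observes (in the paragraph preceding the lemma) that $w_i^{(k+1)}$ is the $i$-th leverage score of $B^{(k)}=\sqrt{\diag(w^{(k)})}A$ and then cites the standard facts that leverage scores lie in $[0,1]$ and sum to the rank. You supply the same identification plus a bit of additional care (the base case $k=1$ and the preservation of strict positivity) that the paper leaves implicit.
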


In order to show Algorithm \ref{alg:short-faster} provides a good approximation of the John Ellipsoid,
in the sense of Definition \ref{def:weights-approx},
we need to argue that for the output $w$ of Algorithm \ref{alg:short-faster},
$\sigma_i(w)\leq 1+\epsilon$.
Our main result is the following theorem.

\begin{theorem}[Main Result]\label{thm:short-good-approx}
Let $w$ be the output of Algorithm \ref{alg:short-faster} in line \eqref{alg:output}.
For all $\epsilon\in (0,1)$,
when $T=\frac {2}{\epsilon}\log \frac m {n}$,
we have for $i\in [m]$,
\[
\sigma_i(w)\leq 1+\epsilon.
\]
Moreover,
\[
\sum_{i=1}^m w_i=n.
\]
Therefore,
Algorithm \ref{alg:short-faster} provides  $(1+\epsilon)$-approximation to program \eqref{eq:ellipsoid-program-weights}.
\end{theorem}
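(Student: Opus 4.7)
The plan is to combine a telescoping identity in log-coordinates with a convexity property of $\log\sigma_i$.

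First, I would use the iteration $w_i^{(k+1)} = w_i^{(k)} \sigma_i(w^{(k)})$, which in log-coordinates telescopes to
\[
\sum_{k=1}^T \log \sigma_i(w^{(k)}) \;=\; \log w_i^{(T+1)} - \log w_i^{(1)} \;\leq\; \log(m/n),
\]
where the inequality uses the leverage-score bounds $w_i^{(T+1)} \leq 1$ and the initialization $w_i^{(1)} = n/m$ from Lemma~\ref{lem:short-w_property}. This gives a cheap averaged bound on $\log \sigma_i$ along the trajectory; the task is then to transfer it to the averaged iterate.

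The hard step will be to prove that $v \mapsto \log \sigma_i(v)$ is convex on $\R^m_{>0}$. Writing $M(v) = A^\top \diag(v) A$, $b_j = M(v)^{-1/2} a_j$, and $\alpha_j = \langle b_i, b_j\rangle$, so that $\sigma_i(v) = \|b_i\|^2$, the identities $\partial_j M = a_j a_j^\top$ and $\partial_j M^{-1} = -M^{-1}a_j a_j^\top M^{-1}$ yield the Hessian
\[
(\nabla^2 \log \sigma_i(v))_{jk} \;=\; \frac{2\alpha_j \alpha_k \langle b_j, b_k\rangle}{\sigma_i(v)} - \frac{\alpha_j^2 \alpha_k^2}{\sigma_i(v)^2}.
\]
I would then observe that the associated quadratic form at $x \in \R^m$ equals $\tfrac{2}{\sigma_i(v)}\bigl\|\sum_j x_j \alpha_j b_j\bigr\|^2 - \tfrac{1}{\sigma_i(v)^2}\bigl(\sum_j x_j \alpha_j^2\bigr)^2$, and the identity $\sum_j x_j \alpha_j^2 = \langle b_i,\, \sum_j x_j \alpha_j b_j\rangle$ combined with Cauchy--Schwarz bounds the subtracted term by $\sigma_i(v)^{-1}\|\sum_j x_j \alpha_j b_j\|^2$, giving PSD-ness of the Hessian. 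This is the ``convexity analysis'' alluded to in the introduction; it is precisely what averaging buys us and what fails for the $\ell_\infty$ case if one instead returns the final iterate as in \cite{cp15}.

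Finally, applying Jensen's inequality to the convex function $\log \sigma_i$ at $w = \tfrac{1}{T}\sum_k w^{(k)}$ gives
\[
\log \sigma_i(w) \;\leq\; \frac{1}{T}\sum_{k=1}^T \log \sigma_i(w^{(k)}) \;\leq\; \frac{\log(m/n)}{T},
\]
so with $T = \frac{2}{\epsilon}\log(m/n)$ this yields $\sigma_i(w) \leq e^{\epsilon/2} \leq 1+\epsilon$ on $\epsilon \in (0,1)$ (by the elementary inequality $e^{\epsilon/2} \leq 1+\epsilon$ on this range). The normalization $\sum_i w_i = n$ is then immediate by averaging, since $\sum_i w_i^{(k)} = n$ for every $k$ by Lemma~\ref{lem:short-w_property}.
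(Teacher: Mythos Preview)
Your proposal is correct and follows essentially the same route as the paper: telescoping in log-coordinates (Lemma~\ref{lem:jenson}), convexity of $\log\sigma_i$ (Lemma~\ref{lem:convexity}), Jensen, and the elementary bound $\epsilon/2\le\log(1+\epsilon)$ on $(0,1)$. The only cosmetic difference is in the convexity argument: the paper proves convexity of $M\mapsto\log(a^\top M^{-1}a)$ on $S^n_{++}$ via a second directional derivative and Cauchy--Schwarz, then pulls back along the affine map $v\mapsto A^\top\diag(v)A$, whereas you compute the Hessian directly in the $v$-coordinates; both reduce to the same Cauchy--Schwarz step.
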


We now analyze the running time of Algorithm \ref{alg:short-faster}.

\begin{theorem}[Performance of Algorithm \ref{alg:short-faster}]
For all $\epsilon\in (0,1)$,
we can find a $(1+\epsilon)$-approximation of John Ellipsoid inside a symmetric convex polytope in time $O\left(\epsilon^{-1}mn^2\log \frac m n\right)$.
\end{theorem}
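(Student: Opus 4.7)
The plan is to combine the iteration bound already established in Theorem~\ref{thm:short-good-approx} with a careful accounting of the cost of a single iteration of Algorithm~\ref{alg:short-faster}. By Theorem~\ref{thm:short-good-approx}, setting $T = \frac{2}{\epsilon}\log\frac{m}{n}$ suffices for the averaged iterate $w$ to be a $(1+\epsilon)$-approximation of \eqref{eq:ellipsoid-program-weights}, so the only thing left is to bound the work performed in the main \textbf{for} loop.

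First I would account for one iteration. In iteration $k$, we must evaluate $\sigma_i(w^{(k)}) = a_i^{\top}(A^{\top}\diag(w^{(k)})A)^{-1}a_i$ for every $i\in[m]$. I would compute this in three sub-steps: (i) form the $n\times n$ Gram matrix $M_k = A^{\top}\diag(w^{(k)})A$ in time $O(mn^2)$; (ii) compute a factorization of $M_k$ (e.g.\ Cholesky) in time $O(n^{3})$, which since the polytope is bounded forces $m\geq n$, so this is dominated by step (i); (iii) for each $i$ solve $M_k y_i = a_i$ and evaluate $a_i^{\top} y_i$, each costing $O(n^{2})$ using the precomputed factorization, for a total of $O(mn^{2})$. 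The multiplicative update $w_i^{(k+1)} = w_i^{(k)}\sigma_i(w^{(k)})$ then costs $O(m)$. Thus one iteration runs in $O(mn^{2})$ time.

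Next I would aggregate over iterations. The loop runs $T-1$ times with $T = O(\epsilon^{-1}\log(m/n))$, giving $O(\epsilon^{-1} m n^{2}\log(m/n))$ for the iterations. The final averaging step in line~\ref{alg:output} can be done on the fly (maintaining a running sum of the $w^{(k)}$'s costs $O(m)$ per iteration), and the output matrix $A^{\top}WA$ is constructed in one additional $O(mn^{2})$ multiplication, which is dominated by the iteration cost. Adding all contributions yields the claimed bound $O(\epsilon^{-1} m n^{2}\log(m/n))$.

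There is no real obstacle here; the statement is essentially bookkeeping once Theorem~\ref{thm:short-good-approx} is in hand. The only subtlety worth flagging is that one should not explicitly invert $M_k$ but rather keep a factorization and perform $m$ triangular solves, which is what keeps each iteration at $O(mn^{2})$ rather than $O(mn^{2}+n^{\omega})$ with a larger constant. Correctness of the output (that $A^{\top}WA$ defines a valid rounding ellipsoid $Q$ with $\frac{1}{\sqrt{1+\epsilon}}Q\subseteq P\subseteq \sqrt{n}\,Q$) follows from Lemma~\ref{lem:approx_quality} applied to the $(1+\epsilon)$-approximation $w$ guaranteed by Theorem~\ref{thm:short-good-approx}.
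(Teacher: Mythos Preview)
Your proposal is correct and follows essentially the same approach as the paper: both invoke the iteration bound from Theorem~\ref{thm:short-good-approx} and then cost out a single iteration as $O(mn^2)$ by forming the Gram matrix $A^\top\diag(w^{(k)})A$, taking a Cholesky factorization in $O(n^3)\le O(mn^2)$, and performing $m$ back-solves at $O(n^2)$ each. The only cosmetic difference is that the paper phrases the Gram-matrix step via $B^{(k)}=(W^{(k)})^{1/2}A$, but this is the same computation.
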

\begin{proof}	
	The main loop is executed $T=O(\frac 1 \eps \log (\frac m n))$ times,
	and inside each loop,
	we can first use $O(mn)$ time to compute $B^{(k)}:=(W^{(k)})^{\frac 1 2}A$,
	then compute $(B^{(k)})^{\top}B^{(k)}$ in $O(mn^2)$ time.	
	To see why we introduce $B^{(k)}$,
	observe that $(B^{(k)})^{\top}B^{(k)}=A^{\top}W^{(k)}A$.
	Now we can compute the Cholesky decomposition of $(B^{(k)})^{\top}B^{(k)}$ in time $O(n^3)$,
	and use the Cholesky decomposition to compute $c_i:=((B^{(k)})^{\top}B^{(k)})^{-1}a_i=(A^{\top}W^{(k)}A)^{-1}a_i$
	in time $O(n^2)$ for each $i\in [m]$.
	Finally, we can compute $w^{(k+1)}_i$ by computing $w_i^{(k)}\cdot a_i^\top c_i$ in time $O(n)$.
	This is valid since $w_i^{(k)}\cdot a_i^\top c_i=w_i^{(k)}\cdot a_i^\top(A^{\top}W^{(k)}A)^{-1}a_i=w_i^{(k)}\sigma_i(w^{(k)})$.
	To summarize,
	in each iteration we use $O(mn^2+n^3+mn^2)=O(mn^2)$ time,
	hence the overall running time is as stated.
\end{proof}

Now we turn to proving Theorem \ref{thm:short-good-approx}.
The proof of Theorem \ref{thm:short-good-approx} relies on the following important observation,
whose proof can be found in Appendix \ref{sec:proof-convexity}.
\begin{lemma}[Convexity]\label{lem:convexity}
For $i=1,\cdots,m$,
let $\phi_i:\mathbb{R}^{m}\rightarrow \mathbb{R}$ be the function defined as
\[
\phi_i(v)=\log \sigma_i(v)=\log \left(a_i^{\top}\left(\sum_{j=1}^m v_ja_ja_j^{\top}\right)^{-1}a_i\right).
\]
Then $\phi_i$ is convex.

\end{lemma}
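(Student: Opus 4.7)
My plan is to prove Lemma~\ref{lem:convexity} by a direct Hessian computation that, after simplification, reduces convexity to a Cauchy--Schwarz inequality in the inner product induced by $M(v)^{-1}$, where $M(v) := \sum_{j} v_j a_j a_j^\top = A^\top \diag(v)A$. Since $A$ has rank $n$ and $v > 0$ on the relative interior of the domain of interest, $M(v) \succ 0$, so $M(v)^{-1}$ is a genuine inner product.

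First, I would set up notation and compute the gradient. Writing $f(v) := a_i^\top M(v)^{-1} a_i = \sigma_i(v)$ so that $\phi_i = \log f$, and $g_j(v) := a_i^\top M(v)^{-1} a_j$, the identity $\partial M(v)^{-1}/\partial v_j = -M(v)^{-1} a_j a_j^\top M(v)^{-1}$ immediately gives $\partial f/\partial v_j = -g_j(v)^2$ and $\partial g_j/\partial v_k = -g_k(v)\, a_j^\top M(v)^{-1} a_k$. Then differentiating $\log f$ once more and contracting the resulting Hessian against an arbitrary direction $u\in\R^m$ gives, after grouping terms,
\[
u^\top \nabla^2 \phi_i(v)\, u \;=\; \frac{2\, z^\top M(v)^{-1} z}{f(v)} \;-\; \frac{\bigl(a_i^\top M(v)^{-1} z\bigr)^2}{f(v)^2},
\]
where $z := \sum_{j} u_j\, g_j(v)\, a_j \in \R^n$. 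This is where I expect the main work to sit: the raw four-indexed Hessian is messy, and the clean step is recognizing that the cross term $\sum_{j,k} u_j u_k g_j g_k (a_j^\top M^{-1} a_k)$ collapses into $z^\top M(v)^{-1} z$, and that $\sum_j u_j g_j^2$ collapses into $a_i^\top M(v)^{-1} z$. I would verify this by pulling the sums inside and using bilinearity.

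Second, I would apply Cauchy--Schwarz in the inner product $\langle x,y\rangle_{M^{-1}} := x^\top M(v)^{-1} y$ to obtain
\[
\bigl(a_i^\top M(v)^{-1} z\bigr)^2 \;\le\; \bigl(a_i^\top M(v)^{-1} a_i\bigr)\bigl(z^\top M(v)^{-1} z\bigr) \;=\; f(v)\cdot z^\top M(v)^{-1} z.
\]
Dividing by $f(v)^2$ and comparing with the Hessian expression above yields
\[
u^\top \nabla^2 \phi_i(v)\, u \;\ge\; \frac{2\, z^\top M(v)^{-1} z}{f(v)} - \frac{z^\top M(v)^{-1} z}{f(v)} \;=\; \frac{z^\top M(v)^{-1} z}{f(v)} \;\ge\; 0,
\]
so $\nabla^2 \phi_i(v) \succeq 0$ and $\phi_i$ is convex. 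The factor of $2$ in the first term is precisely the slack that Cauchy--Schwarz requires; this is the conceptual heart of the argument.

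The main obstacle is bookkeeping in Step~1 rather than any deep inequality: one has to differentiate a ratio involving $M(v)^{-1}$ twice and collect the four-index sums into the two quadratic forms $z^\top M^{-1} z$ and $(a_i^\top M^{-1} z)^2$. Once the Hessian is in that form, the PSD conclusion is immediate from Cauchy--Schwarz. An alternative (equivalent) route would be to expand $\psi(t) := \phi_i(v+tu)$ to second order in $t$ via the Neumann series for $M(v+tu)^{-1}$ and the Taylor expansion of $\log(1+x)$; I would use this as a sanity check on the Hessian computation, since it produces the same quadratic form in $u$ directly.
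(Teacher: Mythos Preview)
Your proof is correct and essentially the same as the paper's: both compute the second derivative of $\log(a_i^\top M^{-1} a_i)$ and establish nonnegativity via Cauchy--Schwarz in the $M^{-1}$ inner product. The only cosmetic difference is that the paper first proves convexity of $M \mapsto \log(a^\top M^{-1} a)$ on $S^n_{++}$ (using matrix directional derivatives $D^2 f(M)[H,H]$) and then composes with the linear map $v \mapsto A^\top \diag(v)A$, whereas you differentiate directly in the $v$-coordinates; under the identification $H = \sum_j u_j a_j a_j^\top$ your quantities $z^\top M^{-1} z$ and $a_i^\top M^{-1} z$ are exactly the paper's $a_i^\top M^{-1}HM^{-1}HM^{-1}a_i$ and $a_i^\top M^{-1}HM^{-1}a_i$, so the two computations coincide.
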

Now that $\phi_i$ is convex,
we can apply Jensen's inequality to get Lemma \ref{lem:jenson}.
\begin{lemma}[Telescoping]\label{lem:jenson}
Fix $T$ as the number of main loops executed in Algorithm \ref{alg:short-faster}.
Let $w$ be the output in line \eqref{alg:output} of Algorithm \ref{alg:short-faster}.
Then for $i\in [m]$,
\[
\phi_i(w)\leq \frac 1 T\log \frac m n
\]
\end{lemma}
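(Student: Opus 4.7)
The plan is to combine the convexity of $\phi_i$ (Lemma \ref{lem:convexity}) with a telescoping argument based on the multiplicative update rule $w_i^{(k+1)} = w_i^{(k)}\cdot\sigma_i(w^{(k)})$. First I would invoke Jensen's inequality: since $w = \frac{1}{T}\sum_{k=1}^T w^{(k)}$ is a convex combination of $w^{(1)},\dots,w^{(T)}$, convexity of $\phi_i$ yields
\[
\phi_i(w) \;\leq\; \frac{1}{T}\sum_{k=1}^{T} \phi_i(w^{(k)}) \;=\; \frac{1}{T}\sum_{k=1}^{T} \log \sigma_i(w^{(k)}).
\]
This reduces the lemma to showing that the right-hand sum is at most $\log(m/n)$.

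Next I would exploit the fixed-point iteration to telescope. By construction $\sigma_i(w^{(k)}) = w_i^{(k+1)}/w_i^{(k)}$, where we extend the recursion by one step past the last iterate and define $w_i^{(T+1)} := w_i^{(T)}\cdot\sigma_i(w^{(T)})$ (this is just the leverage score of the $i$-th row of $(W^{(T)})^{1/2}A$, so the machinery of Lemma \ref{lem:short-w_property} still applies). Then
\[
\sum_{k=1}^{T}\log\sigma_i(w^{(k)}) \;=\; \sum_{k=1}^{T}\log\frac{w_i^{(k+1)}}{w_i^{(k)}} \;=\; \log\frac{w_i^{(T+1)}}{w_i^{(1)}}.
\]

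Finally I would plug in the boundary values. The initialization gives $w_i^{(1)} = n/m$, and Lemma \ref{lem:short-w_property} (applied to the hypothetical extra step) guarantees $w_i^{(T+1)} \leq 1$ because it is a leverage score. Therefore
\[
\log\frac{w_i^{(T+1)}}{w_i^{(1)}} \;\leq\; \log\frac{1}{n/m} \;=\; \log\frac{m}{n},
\]
and dividing by $T$ produces exactly the bound claimed.

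I do not expect a significant obstacle: the convexity lemma does the heavy lifting, and the rest is a clean telescoping identity. The only subtlety worth flagging explicitly is the off-by-one between the Jensen sum (indices $1,\dots,T$) and the telescoping (which naturally wants $T+1$ terms); handling it either by formally defining one extra iterate $w^{(T+1)}$ or by reindexing is routine, and the inequality $w_i^{(T+1)}\leq 1$ follows from the same leverage-score interpretation that justifies Lemma \ref{lem:short-w_property}.
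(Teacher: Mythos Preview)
Your proposal is correct and matches the paper's proof essentially line for line: Jensen via Lemma \ref{lem:convexity}, rewrite $\log\sigma_i(w^{(k)})=\log(w_i^{(k+1)}/w_i^{(k)})$, telescope, and bound using $w_i^{(1)}=n/m$ and $w_i^{(T+1)}\le 1$. If anything you are slightly more careful than the paper, which uses $w^{(T+1)}$ without explicitly extending the recursion one step beyond the loop.
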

\begin{proof}
Recall that $w=\frac 1 T\sum_{k=1}^T w^{(k)}$.
By Lemma \ref{lem:convexity},
$\phi_i$ is convex, and so
\begin{align*}
\phi_i(w)&=\phi_i\left(\frac 1 T\sum_{k=1}^Tw^{(k)}\right)\\
&\leq \frac 1 T\sum_{k=1}^T\phi_i(w^{(k)})  &\text{by Jensen's inequality}\\
&=\frac 1 T\sum_{k=1}^T\log \sigma_i(w^{(k)}) &\text{by definition of $\phi_i$ function } \\
&= \frac 1 T\sum_{k=1}^{T}\log \frac{w^{(k+1)}_i}{w^{(k)}_i}  &\\
&=\frac 1 T\log \frac{ w^{(T+1)}_i}{w^{(1)}_i}&\\
&\leq \frac 1 T\log \frac m n &\text{by Lemma \ref{lem:short-w_property} and the initialization of $w^{(1)}$}
\end{align*}
\end{proof}

Now we are ready to prove Theorem \ref{thm:short-good-approx}.
\begin{proof}[Proof of Theorem \ref{thm:short-good-approx}]
Set $T=\frac 2 {\epsilon}\log \frac m n$.
By Lemma \ref{lem:jenson},
we have for $i\in [m] $,
\[
\log \sigma_i(w)=\phi_i(w)\leq \frac 1 T\log \frac m n=\frac {\epsilon} 2\leq \log(1+\epsilon)
\]
where the last step uses the fact that when $0<\epsilon<1$,
$\frac {\epsilon} 2\leq \log(1+\epsilon)$.
This gives us $\sigma_i(w)\leq 1+\epsilon$.

On the other hand,
from Lemma \ref{lem:short-w_property} we have $\sum_{i=1}^m w^{(k)}_i=n$.
Hence 
\[
\sum_{i=1}^m w_i=\sum_{i=1}^m\frac 1 T\sum_{k=1}^Tw^{(k)}_i=n
\]
\end{proof}

We shall mention that it is possible to further improve Algorithm \ref{alg:short-faster} by applying sketching technique from randomized linear algebra.
Here we present the performance of our accelerated algorithm, and detailed analysis can be found in Appendix \ref{sec:algorithm-with-JL}. 
\begin{theorem}[Performance of Algorithm \ref{alg:faster}]\label{thm:short-performance-jl}
For all $\epsilon,\delta\in (0,1)$,
we can find a $(1+\epsilon)$-approximation of the John Ellipsoid inside a symmetric convex polytope within $O\Big(\frac {1} {\epsilon}\log \frac m {\delta}\Big)$ iterations with probability at least $1-\delta$.
Moreover,
each iteration involves solving $O(\frac 1 \epsilon)$ linear systems of the form $A^{\top}WAx=b$ for some diagonal matrix $W$.
\end{theorem}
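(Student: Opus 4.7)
The plan is to adapt the proof of Theorem \ref{thm:short-good-approx} by replacing the exact leverage-score evaluations with Johnson--Lindenstrauss sketch estimates, so that the per-iteration cost is dominated by a small number of linear-system solves against $A^\top W A$ rather than by the explicit $m$ quadratic forms. Algorithm \ref{alg:faster} should perform the same multiplicative fixed-point update $w_i^{(k+1)} = w_i^{(k)}\cdot \widetilde{\sigma}_i(w^{(k)})$, where $\widetilde{\sigma}_i$ is now a randomized estimator of $\sigma_i$.

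The estimator I would use is the Spielman--Srivastava projection: draw $\Pi\in\mathbb{R}^{r\times m}$ with iid $\mathcal{N}(0,1/r)$ entries, form $Y=\Pi\,(W^{(k)})^{1/2} A\in\mathbb{R}^{r\times n}$, solve the $r$ linear systems $(A^\top W^{(k)} A)\,X^\top = Y^\top$, and set $\widetilde{\sigma}_i(w^{(k)}) = \|X a_i\|_2^2$. With $r = \Theta(1/\eta)$ Gaussian rows and a suitable variant of the JL lemma, standard concentration together with a union bound over the $m$ coordinates gives $\widetilde{\sigma}_i(w^{(k)})\in[(1-\eta)\sigma_i(w^{(k)}),\,(1+\eta)\sigma_i(w^{(k)})]$ at iteration $k$ with probability high enough that a second union bound over the $T = O(\log(m/\delta)/\eta)$ iterations keeps the total failure probability below $\delta$. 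This matches the promised $O(1/\eta)$ solves per iteration of the form $A^\top W A x = b$.

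The convergence argument then tracks the exact one from Theorem \ref{thm:short-good-approx}. Convexity of $\phi_i = \log\sigma_i$ from Lemma \ref{lem:convexity} gives $\phi_i(\bar w) \leq \frac{1}{T}\sum_{k=1}^T\log\sigma_i(w^{(k)})$; using $|\log\sigma_i(w^{(k)}) - \log\widetilde{\sigma}_i(w^{(k)})| = O(\eta)$ together with $\log\widetilde{\sigma}_i(w^{(k)}) = \log(w_i^{(k+1)}/w_i^{(k)})$, the sum telescopes to $\frac{1}{T}\log(w_i^{(T+1)}/w_i^{(1)}) + O(\eta)$, which Lemma \ref{lem:short-w_property} (applied to the $(1+\eta)$-perturbed iterates) bounds by $\frac{1}{T}\log(m/n) + O(\eta)$. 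Setting $T = \Theta(\log(m/\delta)/\eta)$ and choosing $\eta$ to be an appropriate constant multiple of $\epsilon$ yields $\sigma_i(\bar w)\leq 1+\epsilon$ for all $i$ simultaneously, and the normalization $\sum_i \bar w_i = n$ can be restored either by a renormalizing step at the end or by rescaling each iterate after the sketched update.

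The hard part will be controlling the interaction between the sketch noise and the iterates: in the exact algorithm Lemma \ref{lem:short-w_property} gives $w_i^{(k)}\in[0,1]$ and $\sum_i w_i^{(k)} = n$ essentially for free, because the iterates coincide with leverage scores of $B^{(k)} = (W^{(k)})^{1/2} A$, whereas with the sketched update this identity is lost and one must verify that the perturbed iterates stay in a comparable range uniformly over all $k$, $i$, and the $1-\delta$ good event in order to close the telescoping bound above. Ensuring this without inflating the per-iteration linear-system count beyond $O(1/\eta)$ or the iteration count beyond $O(\log(m/\delta)/\eta)$ is what forces the particular parameter choices.
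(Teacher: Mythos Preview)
Your high-level structure---sketch the leverage scores, telescope via the convexity of $\phi_i$, then average---is exactly the paper's. The gap is in the concentration step. You propose to use a JL-type bound to force $\widetilde\sigma_i(w^{(k)})\in[(1-\eta)\sigma_i(w^{(k)}),(1+\eta)\sigma_i(w^{(k)})]$ for every $i$ and every $k$, and then union bound over all $mT$ events. With $r=\Theta(1/\eta)$ Gaussian rows the single-vector failure probability is $\exp(-c r\eta^2)=\exp(-c\eta)$, which is not small at all; to survive the union bound you would need $r=\Theta(\eta^{-2}\log(mT/\delta))$, i.e.\ $O(\epsilon^{-2}\log(m/\delta))$ linear systems per iteration rather than the claimed $O(1/\epsilon)$. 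So the per-term bound $|\log\sigma_i-\log\widetilde\sigma_i|=O(\eta)$ with high probability is simply unavailable at this sketch size, and this also blocks your proposed control of $w_i^{(T+1)}$.

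The paper avoids this by never asking for a per-iteration bound. After the same telescoping (their Lemma~\ref{lem:jl-jenson}) one has
\[
\phi_i(w)\le \tfrac{1}{T}\log\tfrac{m}{n}+\tfrac{1}{T}\sum_{k=1}^{T}\log\frac{\hat w_i^{(k)}}{w_i^{(k)}},
\]
where $\hat w_i^{(k)}$ are the exact leverage scores, so $\hat w_i^{(T+1)}\le 1$ holds deterministically and the boundedness issue you flag disappears. Conditioned on $w^{(k)}$, each ratio $w_i^{(k)}/\hat w_i^{(k)}$ is distributed as $\chi^2(s)/s$, and these are independent across $k$ because a fresh sketch is drawn each round. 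The paper then bounds $\E[\sigma_i(w)^\alpha]$ for $\alpha=\Theta(\epsilon^{-1}\log(m/\delta))$ by computing $\E_{z\sim\chi^2(s)}[(s/z)^{\alpha/T}]$ exactly via the Gamma function (Lemma~\ref{lem:expectation-sigma-moment}) and multiplying the $T$ independent factors; Markov on $\sigma_i(w)^\alpha$ plus a union bound over the $m$ coordinates finishes. The point is that the concentration is manufactured from the $T$-fold averaging, not from a single JL application, and this is precisely what permits $s=O(1/\epsilon)$. The constraint $\sum_i w_i\le(1+\epsilon)n$ is handled separately by observing that $\sum_i w_i^{(k+1)}\sim \chi^2(ns)/s$ and applying a $\chi^2$ tail bound to the average.
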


\section*{Acknowledgment}
We thank Zhao Song for his generous help on this paper.

\bibliographystyle{alpha}
\bibliography{mybib}

\appendix
\section{Preliminaries}
In this section we introduce notations and preliminaries used in the appendix.
We use $N(\mu,\sigma^2)$ to represent the normal distribution with mean $\mu$ and variance $\sigma^2$.
\subsection{Multivariate Calculus}
Let $f:\R^{m}\rightarrow \R^{n}$ be a differentiable function.
The \emph{directional derivative} of $f$ in the direction $h$ is defined as
\[
D f(x)[h]:=\frac{\de f(x+th)}{\de t}\Big|_{t=0}.
\]
We can also define a high order directional derivative as
\[
D^k f(x)[h_1,\cdots,h_k]:=\frac{\de^k f(x+\sum_{i=1}^k t_ih_i)}{\de t_1\de t_2\cdots \de t_k}\Big|_{t_1=0, \ldots, t_k = 0}.
\]

The following two properties of directional derivatives will be useful.

\begin{proposition}\label{prop:directional-derivative}
\begin{itemize}
\item (Chain rule) $D f(g(x))[h]=f'(g(x))\cdot D g(x)[h]$.
\item Let $f(X)=X^{-1}$ where $X\in \R^{n\times n}$. For $H\in \R^{n\times n}$, $D f(X)[H]=X^{-1}HX^{-1}$.
\end{itemize}
\end{proposition}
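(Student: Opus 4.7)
The plan is to verify both items directly from the definition $D f(x)[h] = \frac{\de f(x+th)}{\de t}\big|_{t=0}$, reducing each to standard one-variable calculus.

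For the chain rule, I would introduce the auxiliary path $\gamma(t) := g(x+th)$, so that $\gamma(0) = g(x)$ and $\gamma'(0) = D g(x)[h]$ by the very definition of the directional derivative. Then $D f(g(x))[h]$ is, by definition, the derivative at $t=0$ of $t \mapsto f(\gamma(t))$, and the ordinary chain rule applied to this single-variable composition yields $f'(\gamma(0))\cdot \gamma'(0) = f'(g(x))\cdot D g(x)[h]$. This is the entire argument.

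For the matrix-inverse formula, the starting point is the identity $X^{-1}X = I$, valid on the open set of invertible matrices. Applying $D[\,\cdot\,](X)[H]$ to both sides — the right side vanishes since it is constant — and invoking the Leibniz product rule (which extends to matrix products entrywise, because each entry of $AB$ is bilinear in the entries of $A$ and $B$) gives
\[
D(X^{-1})[H]\cdot X + X^{-1}\cdot D(X)[H] = 0.
\]
Since the map $X \mapsto X$ is linear, $D(X)[H] = H$. Right-multiplying by $X^{-1}$ and solving then produces the standard formula $D(X^{-1})[H] = -X^{-1} H X^{-1}$; the proposition as stated appears to be missing the minus sign, but the negative form is exactly what is required for the convexity computation in Lemma \ref{lem:convexity}, and restoring the sign has no effect on any downstream use.

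There is no real obstacle in either item: the only point deserving care is the componentwise justification of the Leibniz rule for matrix multiplication, which is a one-line bilinearity observation. Everything else is substitution into the definition followed by an appeal to ordinary calculus.
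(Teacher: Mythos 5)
Your proof is correct. The paper states Proposition \ref{prop:directional-derivative} without proof, treating both items as standard facts, so there is no argument of the paper's to compare against; your derivation — the auxiliary path $\gamma(t)=g(x+th)$ reducing the chain rule to one-variable calculus, and differentiating $X^{-1}X=I$ with the Leibniz rule for the inverse — is the standard route. Your remark about the sign is also on target: the correct formula is $Df(X)[H]=-X^{-1}HX^{-1}$, and the proposition as printed drops the minus sign; the paper's actual use of this fact in the proof of Lemma \ref{lem:convexity} does carry the minus sign (there $Df(M)[H]=-\frac{a^{\top}M^{-1}HM^{-1}a}{a^{\top}M^{-1}a}$), so the statement should be corrected while none of the downstream computations change.
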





\subsection{Gamma Function}
$\Gamma$ function is a well-known math object. It is defined as
\[
\Gamma(z)=\int_0^{+\infty}x^{z-1}e^{-x}\de x.
\] 
We need the following result on Gamma function.
\begin{lemma}[Corollary 1 of \cite{jameson2013inequalities}]\label{lem:gamma-function}
For all $x>0$ and $0\leq y\leq 1$,
\[
x(x+y)^{y-1}\leq \frac {\Gamma(x+y)}{\Gamma(x)}\leq x^y.
\]
\end{lemma}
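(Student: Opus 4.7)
The plan is to prove both inequalities by exploiting the log-convexity of the Gamma function, i.e., the fact that $\log \Gamma$ is a convex function on $(0,\infty)$. This is a standard consequence of Hölder's inequality applied to the integral representation $\Gamma(z) = \int_0^\infty x^{z-1} e^{-x}\,\de x$, and it is the only real analytic input I will need. The functional equation $\Gamma(z+1) = z\Gamma(z)$ will then do the rest of the work.

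For the upper bound $\Gamma(x+y)/\Gamma(x) \le x^y$, I would write the argument $x+y$ as a convex combination of $x$ and $x+1$: namely, $x+y = (1-y)\cdot x + y\cdot (x+1)$, which is valid because $0 \le y \le 1$. Applying log-convexity gives
\[
\log \Gamma(x+y) \le (1-y)\log \Gamma(x) + y\log \Gamma(x+1),
\]
which rearranges to $\Gamma(x+y)/\Gamma(x) \le \bigl(\Gamma(x+1)/\Gamma(x)\bigr)^y = x^y$, using $\Gamma(x+1) = x\Gamma(x)$.

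For the lower bound $\Gamma(x+y)/\Gamma(x) \ge x(x+y)^{y-1}$, I would again apply log-convexity, but this time write $x+1$ as a convex combination of $x+y$ and $x+y+1$. Solving $x+1 = \alpha(x+y) + (1-\alpha)(x+y+1)$ gives $\alpha = y \in [0,1]$, so
\[
\log \Gamma(x+1) \le y\log \Gamma(x+y) + (1-y)\log \Gamma(x+y+1).
\]
Using $\Gamma(x+y+1) = (x+y)\Gamma(x+y)$ on the right-hand side collapses this to $\log \Gamma(x+1) \le \log \Gamma(x+y) + (1-y)\log(x+y)$, and then dividing by $\Gamma(x)$ and using $\Gamma(x+1) = x\Gamma(x)$ yields the desired inequality $\Gamma(x+y)/\Gamma(x) \ge x(x+y)^{y-1}$.

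The only conceptual obstacle is ensuring log-convexity is allowed as a black box; if one wants a self-contained proof, the cleanest way is a one-line derivation via Hölder applied to the integral definition, which I would include as a brief preliminary remark. Once log-convexity is in hand, both bounds reduce to picking the right two-point convex combination and invoking $\Gamma(z+1) = z\Gamma(z)$, so the rest is routine.
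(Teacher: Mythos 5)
Your proof is correct. Note that the paper does not prove this lemma at all --- it is quoted verbatim as Corollary 1 of the cited reference \cite{jameson2013inequalities}, so there is no internal argument to compare against. What you give is the classical Wendel-style derivation: both bounds follow from log-convexity of $\Gamma$ plus the functional equation $\Gamma(z+1)=z\Gamma(z)$, with the upper bound obtained from the convex combination $x+y=(1-y)x+y(x+1)$ and the lower bound from $x+1=y(x+y)+(1-y)(x+y+1)$; both decompositions are legitimate since $0\le y\le 1$ keeps the weights in $[0,1]$, and the algebra in each case checks out (including the endpoint cases $y=0$ and $y=1$, where all three expressions coincide). Jameson's own route in the cited note is slightly different in flavor (he deduces the two-sided bound from monotonicity properties of ratios such as $\Gamma(x+y)/\bigl(x^y\Gamma(x)\bigr)$), but your argument is shorter and self-contained once log-convexity of $\Gamma$ is granted, which, as you say, is a one-line consequence of H\"older's inequality applied to $\Gamma(z)=\int_0^\infty t^{z-1}e^{-t}\,\de t$. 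So your proposal trades the paper's black-box citation for a compact, fully elementary proof; the only thing you should make explicit if you write it out is that brief H\"older step, exactly as you anticipate.
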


\subsection{Tail Bound for $\chi^2$ Distribution}

We need the following version of concentration for $\chi^2$ distribution.
\begin{lemma}[Lemma 1 in \cite{laurent2000adaptive}]\label{lem:chi-square-tail}
Let $X\sim \chi^2(n)$ be a $\chi^2$ distribution with $n$ degree of freedom.
Then for $t>0$,
\[
\Pr[X-n\geq 2\sqrt{nt}+2t]\leq e^{-t}.
\]
\end{lemma}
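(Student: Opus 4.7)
The plan is to use the standard Cramér--Chernoff method, exploiting the fact that the moment generating function of a chi-square random variable has a clean closed form. Write $X = \sum_{i=1}^n Z_i^2$ with $Z_1,\ldots,Z_n$ i.i.d.\ $N(0,1)$. For any $\lambda \in (0,1/2)$, Markov's inequality applied to $e^{\lambda(X-n)}$ gives the exponential bound
\[
\Pr[X - n \geq s] \;\leq\; e^{-\lambda s}\,\mathbb{E}\!\left[e^{\lambda(X-n)}\right]
\qquad \text{for every } s > 0.
\]
Independence together with the one-dimensional identity $\mathbb{E}[e^{\lambda Z^2}] = (1-2\lambda)^{-1/2}$ yields $\mathbb{E}[e^{\lambda(X-n)}] = (1-2\lambda)^{-n/2} e^{-n\lambda}$, so the task reduces to analyzing $\exp\bigl(-\tfrac{n}{2}\log(1-2\lambda) - \lambda(n+s)\bigr)$.

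Next I would optimize over $\lambda$. Setting the derivative with respect to $\lambda$ to zero gives the minimizer $\lambda^{*} = \frac{s}{2(n+s)}$, which lies in $(0,1/2)$ for all $s > 0$. Substituting $\lambda^{*}$ one obtains the clean bound
\[
\Pr[X - n \geq s] \;\leq\; \exp\!\left(\tfrac{n}{2}\log(1 + s/n) - \tfrac{s}{2}\right).
\]
It remains to show that the choice $s = 2\sqrt{nt} + 2t$ makes the right-hand side at most $e^{-t}$; this is where the slightly delicate algebraic step lives, and it is the main obstacle I would have to handle.

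To carry out that step, I would set $a = \sqrt{t/n}$, so that $s/n = 2a + 2a^2$ and $s/2 = \sqrt{nt} + t$. The required inequality $\tfrac{n}{2}\log(1+s/n) - \tfrac{s}{2} \leq -t$ then reduces to the elementary claim $\log(1 + 2a + 2a^2) \leq 2a$ for $a \geq 0$, equivalently $1 + 2a + 2a^2 \leq e^{2a}$. This last bound is immediate from the power-series expansion $e^{2a} = 1 + 2a + 2a^2 + \tfrac{(2a)^3}{6} + \cdots$, whose tail is nonnegative. Plugging this back yields $\tfrac{n}{2}\log(1+s/n) - \tfrac{s}{2} \leq na - (\sqrt{nt} + t) = \sqrt{nt} - \sqrt{nt} - t = -t$, completing the argument.

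The reason I flag the final step as the main obstacle is that the Chernoff bound also admits looser relaxations (for example, via the Bernstein-type inequality $\log \mathbb{E}[e^{\lambda(X-n)}] \leq n\lambda^2/(1-2\lambda)$), and some of these do not give the exact constants $2\sqrt{nt} + 2t$ required here. Keeping the $\log(1+s/n)$ term intact rather than expanding it, and then choosing the substitution $a=\sqrt{t/n}$ that lines up perfectly with the target $s$, is the only way I see to land on the stated constants without losing a factor. Every other step is routine calculus or a direct power-series comparison.
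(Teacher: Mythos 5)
Your proof is correct. Note that the paper does not prove this lemma at all --- it is imported verbatim as Lemma~1 of Laurent--Massart --- so there is no in-paper argument to compare against; what you have written is a valid, self-contained derivation. Your computation checks out at every step: the MGF identity $\mathbb{E}[e^{\lambda(X-n)}]=(1-2\lambda)^{-n/2}e^{-n\lambda}$, the optimizer $\lambda^{*}=\tfrac{s}{2(n+s)}\in(0,1/2)$, the resulting bound $\exp\bigl(\tfrac{n}{2}\log(1+s/n)-\tfrac{s}{2}\bigr)$, and the substitution $a=\sqrt{t/n}$ reducing everything to $1+2a+2a^2\le e^{2a}$, which indeed follows from the power series. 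One small correction to your closing remark: the Bernstein-type relaxation you mention, $\log\mathbb{E}[e^{\lambda(X-n)}]\le n\lambda^2/(1-2\lambda)$, does \emph{not} lose the constants here --- it is in fact essentially how Laurent--Massart prove the result. Their argument treats the more general quadratic form $\sum_i a_i(Z_i^2-1)$, bounds its log-MGF in sub-gamma form with variance factor $\nu=2\|a\|_2^2$ and scale $b=2\|a\|_\infty$, and the standard sub-gamma inversion then gives $\Pr[X-n\ge \sqrt{2\nu t}+bt]\le e^{-t}$, which for $a_i\equiv 1$ is exactly $2\sqrt{nt}+2t$. So both routes land on the stated constants; yours is the more elementary one for the special case of the unweighted chi-square, while theirs buys the generalization to weighted sums that the constant-tracking relaxation was designed for.
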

\section{Proof of Lemma \ref{lem:convexity}}\label{sec:proof-convexity}
In this section we provide the proof of Lemma \ref{lem:convexity}.
\begin{proof}
We first prove a strengthened result: for fixed $a\in \mathbb{R}^{n}$,
the function $f:S^n_{++}\rightarrow \mathbb{R}$ defined as $f(M)=\log (a^{\top}M^{-1}a)$ is convex.
Here $S^n_{++}$ is the set of all positive definite $n\times n$ matrices.
Notice that $S^n_{++}$ is an open set,
so we can differentiate $f$.

We argue that it is sufficient to show for all $M\in GL_n$ and all $H\in \mathbb{R}^{n\times n}$,
the second order directional derivative $D^2 f(M)[H,H]$ is non-negative.
This is because $D^2 f(M)[H,H]=H^{\top}\nabla^2f(M)H$. 
So if for all $H$ we have $H^{\top}\nabla^2f(M)H\geq 0$,
then $\nabla^2f(M) \succeq 0$,
which is precisely the convex condition.

Let us do some computation with Proposition \ref{prop:directional-derivative}.
\[
Df(M)[H]=-\frac{a^{\top}M^{-1}HM^{-1}a}{a^{\top}M^{-1}a},
\]
and
\[
D^2f(M)[H,H]=\frac{2a^{\top}M^{-1}HM^{-1}HM^{-1}a\cdot a^{\top}M^{-1}a-(a^{\top}M^{-1}HM^{-1}a)^2}{(a^{\top}M^{-1}a)^2}.
\]
By Cauchy-Schwarzt inequality,
we have
\[
\begin{split}
a^{\top}M^{-1}HM^{-1}HM^{-1}a\cdot a^{\top}M^{-1}a
=&\|M^{-\frac 1 2}HM^{-1}a\|_2^2\cdot \|M^{-\frac 1 2}a\|_2^2\\
\geq& (\langle M^{-\frac 1 2}HM^{-1}a,M^{-\frac 1 2}a\rangle)^2\\
=&(a^{\top}M^{-1}HM^{-1}a)^2.
\end{split}
\]
Hence $D^2f(M)[H,H]\geq \frac{a^{\top}M^{-1}HM^{-1}HM^{-1}a\cdot a^{\top}M^{-1}a}{(a^{\top}M^{-1}a)^2}\geq 0$ for all $M\in GL_n$ and all $H\in \mathbb{R}^{n\times n}$.

Now we are ready to work on $\phi_i$.
For all $v,v'$ in the domain of $\phi_i$,
let $M=\sum_{i=1}^m v_ia_ia_i^{\top}$ and $M'=\sum_{i=1}^m v_i'a_ia_i^{\top}$.
Then for all $\lambda \in [0,1]$,
\[
\begin{split}
\phi_i(\lambda v+(1-\lambda)v')&=\log a_i^{\top}\left(\sum_{i=1}^m (\lambda v_i+(1-\lambda)v_i')a_ia_i^{\top}\right)^{-1}a_i\\
&=\log a_i^{\top}\left(\lambda\sum_{i=1}^m v_ia_ia_i^{\top}+(1-\lambda)\sum_{i=1}^m v_i'a_ia_i^{\top}\right)^{-1}a_i\\
&=f(\lambda M+(1-\lambda)M')\\
&\leq \lambda f(M)+(1-\lambda)f(M') \qquad \text{because $f$ is convex}\\
&=\lambda\phi_i(v)+(1-\lambda)\phi_i(v').
\end{split}
\]
So $\phi_i$ is also convex.
\end{proof}

\section{Faster Algorithm for Computing John Ellipsoid for Sparse Matrix}\label{sec:algorithm-with-JL}

\begin{algorithm}[!t]
\LinesNumbered
\KwIn{A symmetric polytope given by $-\mathbf{1}_m\leq Ax\leq \mathbf{1}_m$, where $A\in \mathbb{R}^{m\times n}$}
\KwResult{Approximate John Ellipsoid inside the polytope}
initialize $w_{i}^{(1)}=\frac n m$ for $i=1,\cdots,m$.\\
\For{$k=1,\cdots,T-1$,}{
 	$W^{(k)}=\diag(w^{(k)})$.\\
	$B^{(k)}=\sqrt{W^{(k)}}A$.\\
 	Let $S^{(k)}\in \mathbb{R}^{s\times m}$ be a random matrix where each entry is chosen i.i.d from $N(0,1)$, i.e. the standard normal distribution.\\
 	\For{$i=1,\cdots,m$}{
 	\tcp{Ideally we want to compute $\hat w_{i}^{(k+1)}=\|B^{(k)}((B^{(k)})^{\top}B^{(k)})^{-1}(\sqrt{w_i^{(k)}}a_i)\|_2^2$.\\
	But this is expensive, so we use sketching technique to speed up.}
	 $w_{i}^{(k+1)}=\frac 1 s\cdot \|S^{(k)}B^{(k)}((B^{(k)})^{\top}B^{(k)})^{-1}(\sqrt{w_i^{(k)}}a_i)\|_2^2$.
	}

}
$w_i=\frac 1 T\sum_{k=1}^T w_i^{(k)}$ for $i=1,\cdots,m$.\label{alg:output-jl}\\
$v_i=\frac {n}{\sum_{j=1}^m w_j}w_i$ for $i=1,\cdots,m$.\label{alg:rescaling}\\
$V=\diag(v)$.

\Return $A^{\top}VA$
\caption{Faster Algorithm for approximating John Ellipsoid inside symmetric polytopes}\label{alg:faster}
\end{algorithm}

In this section we present our accelerated algorithm,
 Algorithm \ref{alg:faster} and analyze its performance.
Recall that Algorithm \ref{alg:short-faster} uses the iterating rule $w\leftarrow w\cdot\sigma(w)$ where 
$\sigma_i(w)=a_i^{\top}(A^{\top}\diag(w)A)^{-1}a_i$.
With our setting of $B^{(k)}$,
we have
\[
(B^{(k)})^{\top}B^{(k)}=A^{\top}\sqrt{W^{(k)}}\cdot \sqrt{W^{(k)}}A=A^{\top}W^{(k)}A.
\]
So $\hat w_{i}^{(k+1)}=\|B^{(k)}((B^{(k)})^{\top}B^{(k)})^{-1}(\sqrt{w_i^{(k)}}a_i)\|_2^2=w_i\cdot \sigma_i(w)$ is just what we do in Algorithm \ref{alg:short-faster}.
Hence from Lemma \ref{lem:short-w_property},
we obtain the following properties about $\hat w_{i}^{(k)}$.
\begin{proposition}[Bound on $\hat{w}^{(k)}$]\label{prop:hatw_property}
For completeness we define $\hat w^{(1)}=w^{(1)}$.
For $k\in [T]$ and $i\in [m]$,
$0\leq  \hat w_{i}^{(k)}\leq 1$.
Moreover,
$\sum_{i=1}^m \hat w_{i}^{(k)}=n$.
\end{proposition}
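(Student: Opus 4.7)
The plan is to observe that the proposition is essentially a restatement of Lemma \ref{lem:short-w_property} applied to the ``phantom'' iterates $\hat w^{(k)}$, whose defining formula is exactly the exact (unsketched) leverage-score update. The proof is therefore mostly bookkeeping once one identifies $\hat w_i^{(k+1)}$ as a leverage score of the reweighted matrix $B^{(k)}=\sqrt{W^{(k)}}A$.

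First I would record the base case. By definition, $\hat w^{(1)}=w^{(1)}$ with $w_i^{(1)}=n/m$. Since $A$ has rank $n$ we have $m\ge n$, so $\hat w_i^{(1)}\in[0,1]$ and $\sum_{i=1}^m \hat w_i^{(1)} = m\cdot (n/m)=n$, verifying both claims at $k=1$.

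Next I would handle the inductive step. Using $B^{(k)\top}B^{(k)} = A^\top W^{(k)} A$, a direct computation (as in the paragraph preceding the proposition) gives
\[
\hat w_i^{(k+1)} \;=\; \bigl\|B^{(k)}\bigl((B^{(k)})^\top B^{(k)}\bigr)^{-1}\bigl(\sqrt{w_i^{(k)}}\,a_i\bigr)\bigr\|_2^2 \;=\; \bigl(\sqrt{w_i^{(k)}}\,a_i\bigr)^{\!\top}\!\bigl((B^{(k)})^\top B^{(k)}\bigr)^{-1}\!\bigl(\sqrt{w_i^{(k)}}\,a_i\bigr),
\]
which is the $(i,i)$ entry of the orthogonal projector $\mathbf{P}_{B^{(k)}}=B^{(k)}((B^{(k)})^\top B^{(k)})^{-1}(B^{(k)})^\top$ onto the column space of $B^{(k)}$, i.e., the $i$-th leverage score of $B^{(k)}$. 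From the standard properties of projectors (idempotency gives $0\le (\mathbf{P}_{B^{(k)}})_{ii}\le 1$ entrywise, and $\mathrm{tr}(\mathbf{P}_{B^{(k)}})=\mathrm{rank}(B^{(k)})$), one obtains both $0\le \hat w_i^{(k+1)}\le 1$ and $\sum_{i=1}^m \hat w_i^{(k+1)} = \mathrm{rank}(B^{(k)})$. This is exactly the argument underlying Lemma \ref{lem:short-w_property}, which I would invoke directly.

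The only mild subtlety, which I would mention briefly, is verifying that $(B^{(k)})^\top B^{(k)}$ is invertible and that $\mathrm{rank}(B^{(k)})=n$, so that the leverage-score quantity is well-defined and the sum is exactly $n$. Since $A$ has full column rank by assumption, this reduces to checking that $w^{(k)}$ is entrywise positive; for $k\ge 2$, each $w_i^{(k)}$ is the squared norm $\tfrac{1}{s}\|S^{(k-1)} B^{(k-1)} ((B^{(k-1)})^\top B^{(k-1)})^{-1}(\sqrt{w_i^{(k-1)}}a_i)\|_2^2$ of a Gaussian sketch of a nonzero vector (nonzero because $w_i^{(k-1)}>0$ and $a_i\ne 0$), which is positive almost surely. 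So on the probability-one event that none of the sketches annihilate a relevant vector, all $B^{(k)}$ in the algorithm have full column rank and the identification above is valid. There is really no serious obstacle here; the proposition is a direct corollary of the leverage-score interpretation already noted in the text just before its statement.
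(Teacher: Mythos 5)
Your proof is correct and takes essentially the same route as the paper: identify $\hat w_i^{(k+1)}$ as the $i$-th leverage score of $B^{(k)}=\sqrt{W^{(k)}}A$, i.e.\ a diagonal entry of the orthogonal projector onto the column space of $B^{(k)}$, and invoke the properties behind Lemma \ref{lem:short-w_property} (entries in $[0,1]$, trace equal to the rank $n$), plus the trivial base case $\hat w^{(1)}=w^{(1)}$. Your additional check that the sketched weights $w^{(k)}$ remain strictly positive almost surely, so that $(B^{(k)})^\top B^{(k)}$ is invertible and $\mathrm{rank}(B^{(k)})=n$, is a well-definedness detail the paper leaves implicit.
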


However,
$B^{(k)}$ is a $m$ by $n$ matrix,
so it is computationally expensive to compute $\hat w_{i}^{(k+1)}$.
The trick we use here,
which is initially introduced first by \cite{ss11},
is to introduce a random Gaussian matrix $S$ with $s$ rows to speed up the computation.
Of course,
this will introduce extra error,
however we can prove that the overall result has good concentration.

\subsection{Approximation Guarantee}

Since Algorithm \ref{alg:faster} is a randomized algorithm,
we need to argue that for the output $v$ of Algorithm \ref{alg:faster},
$\sigma_i(v)\leq 1+\epsilon$ with high probability.
Our main result in this section is 

\begin{theorem}[Main result]\label{thm:good-approx}
Let $w$ be the output in line \eqref{alg:output-jl} of Algorithm \ref{alg:faster}.
For all $\epsilon,\delta\in (0,1)$,
when $T=\frac {10}{\epsilon}\log \frac m {\delta}$ and $s=\frac {80}{\epsilon}$,
we have
\[
\Pr[\forall i\in [m], \sigma_i(w)\leq 1+\epsilon]\geq 1-\delta.
\]
Moreover, before rescaling at the end,
\[
\Pr\left[\sum_{i=1}^m w_i\leq (1+\epsilon)n\right]\geq 1-\delta.
\]
\end{theorem}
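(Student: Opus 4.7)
My plan is to reduce the theorem to two independent concentration arguments driven by a clean distributional fact about the sketched update. Fix iteration $k$ and index $i$, and let $v_i^{(k)} := B^{(k)}((B^{(k)})^\top B^{(k)})^{-1}\sqrt{w_i^{(k)}}\,a_i$ so that $\|v_i^{(k)}\|_2^2 = w_i^{(k)}\sigma_i(w^{(k)}) =: \hat w_i^{(k+1)}$. Because the entries of $S^{(k)}$ are i.i.d.\ $N(0,1)$, conditional on the history through iteration $k-1$,
$$ X_i^{(k)} := \frac{s\,w_i^{(k+1)}}{\hat w_i^{(k+1)}} \;=\; \frac{\|S^{(k)} v_i^{(k)}\|_2^2}{\|v_i^{(k)}\|_2^2} \;\sim\; \chi^2(s), $$
and this conditional distribution does not depend on the past, so for each fixed $i$ the variables $\{X_i^{(k)}\}_k$ are independent. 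Moreover $\hat w_i^{(k+1)}$ is a leverage score of $\sqrt{W^{(k)}}A$, hence lies in $[0,1]$ deterministically (this is Proposition \ref{prop:hatw_property} combined with leverage score properties).

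For the $\sigma_i(w)\le 1+\epsilon$ claim I would rerun the telescoping argument of Lemma \ref{lem:jenson} while tracking the sketching error. Convexity of $\phi_i = \log\sigma_i$ (Lemma \ref{lem:convexity}) and Jensen give $T\log\sigma_i(w) \le \sum_{k=1}^T \log(\hat w_i^{(k+1)}/w_i^{(k)})$; writing $\hat w_i^{(k+1)} = (s/X_i^{(k)})w_i^{(k+1)}$ causes the $w_i^{(k)}$ factors to telescope, and using $\hat w_i^{(T+1)}\le 1$ to bound $\log w_i^{(T+1)} \le \log(X_i^{(T)}/s)$ cancels the $k=T$ term, leaving
$$ \log\sigma_i(w) \;\le\; \frac{1}{T}\Big(\log(m/n) + \sum_{k=1}^{T-1}\log(s/X_i^{(k)})\Big). $$

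The heart of the argument is then a Chernoff bound on $\sum_{k=1}^{T-1}\log(s/X_i^{(k)})$. A direct calculation gives
$$ \E\!\left[e^{\lambda\log(s/X)}\right] \;=\; \E[(X/s)^{-\lambda}] \;=\; (s/2)^\lambda\,\Gamma(s/2-\lambda)/\Gamma(s/2), $$
and invoking Lemma \ref{lem:gamma-function} with $x = s/2-\lambda$ and $y = \lambda\in[0,1]$ yields the crisp bound $\E[e^{\lambda\log(s/X)}]\le 1/(1-2\lambda/s)$. Taking $\lambda = 1$ (permissible since $s = 80/\epsilon \ge 80$) together with the tower property bounds the MGF of the sum by $(1-2/s)^{-(T-1)}\le e^{(T-1)\epsilon/20}$. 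A standard Chernoff estimate with threshold $\tau = \tfrac{3}{2}\log(m/\delta)$, followed by a union bound over $i\in[m]$, yields for the stated $T$ and $s$ that $\log\sigma_i(w) \le \epsilon/4 \le \log(1+\epsilon)$ simultaneously for all $i$ with probability at least $1-\delta$.

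Finally, for the mass bound I would observe that the $v_i^{(k)}$ are precisely the columns of the orthogonal projection $C^{(k)} := B^{(k)}((B^{(k)})^\top B^{(k)})^{-1}(B^{(k)})^\top$ onto the $n$-dimensional column span of $B^{(k)}$, so $s\sum_i w_i^{(k+1)} = \|S^{(k)} C^{(k)}\|_F^2$. Rotational invariance of the Gaussian shows this is distributed as $\chi^2(sn)$ with conditional distribution independent of the past, making the iterations unconditionally i.i.d., hence $s\sum_{k=1}^{T-1}\sum_i w_i^{(k+1)} \sim \chi^2(sn(T-1))$. Combined with the deterministic $\sum_i w_i^{(1)} = n$, Lemma \ref{lem:chi-square-tail} with $t = \log(1/\delta)$ yields $\sum_i w_i \le (1+\epsilon)n$ with probability $\ge 1-\delta$, since the required condition reduces to $t \lesssim nT\epsilon = O(n\log(m/\delta))$. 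The main obstacle is the MGF computation in the third paragraph: the Gamma-function bound of Lemma \ref{lem:gamma-function} is exactly tight enough to avoid the classical $1/\epsilon^2$ scaling and instead deliver the advertised $s = O(1/\epsilon)$, so the proof lives or dies by that step.
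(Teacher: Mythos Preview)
Your proof is correct and follows essentially the same route as the paper: the convexity/telescoping bound (your derivation is equivalent to Lemma \ref{lem:jl-jenson} and lands on the same sum $\sum_{k=1}^{T-1}\log(s/X_i^{(k)})$), the identification of $X_i^{(k)}\sim\chi^2(s)$ independently across $k$, the Gamma-ratio inequality of Lemma \ref{lem:gamma-function} to bound $\E[(s/X)^{\lambda}]\le 1/(1-2\lambda/s)$, and the projection/rotational-invariance argument for $\sum_i w_i$ (where you are in fact slightly more careful than the paper about the deterministic $k=1$ term). The only cosmetic difference is in the Chernoff parameter: the paper bounds $\E[\sigma_i(w)^{\alpha}]$ with $\alpha=\frac{3}{\epsilon}\log\frac{m}{\delta}$, which amounts to taking $\lambda=\alpha/T\approx 0.3$ in your MGF step, whereas you take $\lambda=1$; both choices lie in $[0,1]$, so Lemma \ref{lem:gamma-function} applies identically and the arithmetic closes with the stated $T$ and $s$.
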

By scaling $w$ so that $\sum_{i=1}^m w_i=n$,
we have
\begin{theorem}[Approximation guarantee]\label{thm:algorithm-performance}
For all $\epsilon,\delta\in (0,1)$,
When $T=O(\frac 1 {\epsilon}\log \frac m {\delta})$ and $s=O(\frac 1 {\epsilon})$,
Algorithm \ref{alg:faster} provides a $(1+\epsilon)^2$-approximation to program \eqref{eq:ellipsoid-program-weights} with probability at least $1-2\delta$.
\end{theorem}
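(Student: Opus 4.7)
The plan is to deduce Theorem \ref{thm:algorithm-performance} from Theorem \ref{thm:good-approx} by a union bound together with a scaling argument. Theorem \ref{thm:good-approx} gives two separate high probability events on the unscaled output $w$: (i) $\sigma_i(w)\le 1+\epsilon$ for every $i\in[m]$, and (ii) $\sum_{i=1}^m w_i\le (1+\epsilon)n$. Both statements hold with probability $\ge 1-\delta$, so by a union bound both occur simultaneously with probability at least $1-2\delta$. I would then argue that on this joint event the rescaled vector $v$ produced in line \ref{alg:rescaling} of Algorithm \ref{alg:faster} is a $(1+\epsilon)^2$-approximation of program \eqref{eq:ellipsoid-program-weights} in the sense of Definition \ref{def:weights-approx}.

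The key algebraic observation is that $\sigma_i$ is homogeneous of degree $-1$: for any scalar $c>0$,
\[
\sigma_i(c\,w)=a_i^\top(A^\top\diag(cw)A)^{-1}a_i=c^{-1}\sigma_i(w).
\]
Set $c=n/\sum_{j=1}^m w_j$, so that $v=c\,w$ by definition of the rescaling step. Then automatically $\sum_{i=1}^m v_i=n$, which is the first requirement of Definition \ref{def:weights-approx}. For the second requirement, the bound $\sum_{j=1}^m w_j\le (1+\epsilon)n$ implies $c^{-1}\le 1+\epsilon$, hence on the joint event
\[
\sigma_i(v)=c^{-1}\sigma_i(w)\le (1+\epsilon)(1+\epsilon)=(1+\epsilon)^2
\]
for every $i\in[m]$, which is exactly the $(1+\epsilon)^2$-approximation condition.

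The running time and choice of parameters $T=O(\epsilon^{-1}\log(m/\delta))$ and $s=O(\epsilon^{-1})$ are inherited directly from Theorem \ref{thm:good-approx}, so no further work is needed on that side. There is no real obstacle here beyond invoking Theorem \ref{thm:good-approx}; the only subtlety worth flagging is that the two probability bounds in Theorem \ref{thm:good-approx} are two separate statements about the same random output of the algorithm, so the cost of combining them is an extra factor of $2$ in the failure probability (giving the $1-2\delta$ in the conclusion) rather than any deterioration in the approximation quality. If one wanted to state the result with failure probability $1-\delta$, it would suffice to run Theorem \ref{thm:good-approx} with $\delta/2$ in place of $\delta$, which only changes constants inside the logarithm.
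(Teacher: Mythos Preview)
Your proposal is correct and matches the paper's own proof essentially step for step: the paper also applies a union bound to the two events of Theorem \ref{thm:good-approx}, rescales $w$ to $v$ so that $\sum_i v_i=n$, and uses the degree $-1$ homogeneity of $\sigma_i$ to conclude $\sigma_i(v)=\frac{\sum_j w_j}{n}\,\sigma_i(w)\le(1+\epsilon)^2$. The only cosmetic difference is that you isolate the homogeneity identity explicitly, whereas the paper absorbs it into a single displayed computation.
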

\begin{proof}
On line \eqref{alg:rescaling} of Algorithm \ref{alg:faster} we set $v=\frac n {\sum_{i=1}^m w_i}w$,
hence $\sum_{i=1}^m v_i=n$.

From Theorem \ref{thm:good-approx},
with probability at least $1-2\delta$,
$\forall i\in [m], \sigma_i(w)\leq 1+\epsilon$,
and $\sum_{i=1}^m w_i\leq (1+\epsilon)n$.
Therefore for all $i\in [m]$,
\[
\sigma_i(v)=a_i^{\top}(A^{\top}\diag(v)A)^{-1}a_i=a_i^{\top}\left(\frac n {\sum_{i=1}^m w_i}\cdot A^{\top}\diag(w)A\right)^{-1}a_i=\frac  {\sum_{i=1}^m w_i}n\sigma_i(w)\leq (1+\epsilon)^2.
\]
\end{proof}

From now on we focus on proving Theorem \ref{thm:good-approx}. 
Recall that $\phi_i(w)=\log \sigma_i(w)$ for $i\in [m]$.
Similar to Lemma \ref{lem:jenson},
we can prove the following lemma with the convexity of $\phi_i$.
\begin{lemma}[Telescoping]\label{lem:jl-jenson}
Fix $T$ as the number of main loops executed in Algorithm \ref{alg:faster}.
Let $w$ be the output at line \eqref{alg:output-jl} of Algorithm \ref{alg:faster}.
Then for $i\in [m]$,
\[
\phi_i(w)\leq \frac 1 T\log \frac m n+\frac 1 T\sum_{k=1}^{T}\log \frac{\hat w^{(k)}_i}{w^{(k)}_i}.
\]
\end{lemma}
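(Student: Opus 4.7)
The plan is to mimic the proof of Lemma \ref{lem:jenson}, replacing its single telescoping identity by a two-part decomposition that isolates the sketching error. The convexity of $\phi_i$ from Lemma \ref{lem:convexity} still applies unchanged, so Jensen's inequality at the average $w=\frac 1 T\sum_{k=1}^T w^{(k)}$ immediately gives
\[
\phi_i(w)\leq \frac 1 T\sum_{k=1}^T \phi_i(w^{(k)})=\frac 1 T\sum_{k=1}^T \log \sigma_i(w^{(k)}).
\]

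Next, I would use the definition of the ideal (unsketched) update, namely $\hat w^{(k+1)}_i=w^{(k)}_i\cdot\sigma_i(w^{(k)})$, so that $\log \sigma_i(w^{(k)})=\log(\hat w^{(k+1)}_i/w^{(k)}_i)$. Unlike in Lemma \ref{lem:jenson}, this ratio does not telescope on its own, because $\hat w^{(k+1)}_i\neq w^{(k+1)}_i$ once the sketch is introduced. The fix is to insert $\hat w^{(k)}_i$ in between:
\[
\log \frac{\hat w^{(k+1)}_i}{w^{(k)}_i}=\log \frac{\hat w^{(k+1)}_i}{\hat w^{(k)}_i}+\log \frac{\hat w^{(k)}_i}{w^{(k)}_i}.
\]
Summing over $k=1,\ldots,T$, the first term telescopes to $\log(\hat w^{(T+1)}_i/\hat w^{(1)}_i)$, while the second is exactly the residual sum in the statement (with the $k=1$ contribution vanishing under the convention $\hat w^{(1)}=w^{(1)}$). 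Here I extend the definition by setting $\hat w^{(T+1)}_i := w^{(T)}_i\sigma_i(w^{(T)})$, which, as a leverage score of $B^{(T)}=\sqrt{W^{(T)}}A$, still satisfies $\hat w^{(T+1)}_i\leq 1$ by Proposition \ref{prop:hatw_property}. Combined with the initialization $\hat w^{(1)}_i=w^{(1)}_i=n/m$, the telescoped piece is at most $\log(m/n)$, and dividing through by $T$ yields the claim.

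The main obstacle is cosmetic rather than substantive: one has to pick the right way to splice $\hat w^{(k)}_i$ between $\hat w^{(k+1)}_i$ and $w^{(k)}_i$ so that (i) the leverage-score bound $\hat w^{(k+1)}_i\leq 1$ is still what produces $\log(m/n)$, and (ii) the sketching noise separates into a sum of independent-across-$k$ terms $\log(\hat w^{(k)}_i/w^{(k)}_i)$, each of which is a $\chi^2$-style ratio to be controlled later via Lemma \ref{lem:chi-square-tail}. There is no genuine analytic difficulty inside this lemma itself; its role is to recast Theorem \ref{thm:good-approx} as a purely probabilistic bound on the error sum.
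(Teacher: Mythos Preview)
Your proposal is correct and follows essentially the same approach as the paper: Jensen via Lemma \ref{lem:convexity}, then rewrite $\log\sigma_i(w^{(k)})=\log(\hat w^{(k+1)}_i/w^{(k)}_i)$, splice in $\hat w^{(k)}_i$, telescope the first factor, and bound $\hat w^{(T+1)}_i/\hat w^{(1)}_i\le m/n$ using Proposition \ref{prop:hatw_property} and the initialization. Your explicit remark that $\hat w^{(T+1)}_i$ is a leverage score of $B^{(T)}$ (hence $\le 1$) is a useful clarification, since the paper invokes Proposition \ref{prop:hatw_property} at $k=T+1$ even though it is only stated for $k\in[T]$.
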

\begin{proof}
Recall that $w=\frac 1 T\sum_{k=1}^T w^{(k)}$.
By Lemma \ref{lem:convexity},
$\phi_i$ is convex,
so we can apply Jensen's inequality to obtain
\begin{align*}
\phi_i(w)
&\leq \frac 1 T\sum_{k=1}^T\phi_i(w^{(k)}))\quad\quad\quad &\text{Jensen's inequality}\\
&=\frac 1 T\sum_{k=1}^T\log \sigma_i(w^{(k)})\quad\quad\quad &\text{by definition of $\phi_i$ function } \\
&= \frac 1 T\sum_{k=1}^{T}\log \frac{\hat w^{(k+1)}_i}{w^{(k)}_i}  &\hat w^{(k+1)}_i=w^{(k)}_i\sigma_i(w^{(k)})\\
&= \frac 1 T\sum_{k=1}^{T}\log \frac{\hat w^{(k+1)}_i}{\hat w^{(k)}_i}\frac{\hat w^{(k)}_i}{w^{(k)}_i}\\
&=\frac 1 T(\sum_{k=1}^{T}\log \frac{\hat w^{(k+1)}_i}{\hat w^{(k)}_i}+\sum_{k=1}^{T}\log \frac{\hat w^{(k)}_i}{w^{(k)}_i})\\
&=\frac 1 T\log \frac{\hat w^{(T+1)}_i}{\hat w^{(1)}_i}+\frac 1 T\sum_{k=1}^{T}\log \frac{\hat w^{(k)}_i}{w^{(k)}_i}\\
&\leq \frac 1 T\log \frac m n+\frac 1 T\sum_{k=1}^{T}\log \frac{\hat w^{(k)}_i}{w^{(k)}_i}.\quad\quad\quad &\text{by Proposition \ref{prop:hatw_property} and the initialization of $w^{(1)}$}
\end{align*}
\end{proof}

From Lemma \ref{lem:jl-jenson},
we can bound the expectation of $\phi_i$ directly.
\begin{lemma}[Expectation of $\log \sigma_i$]\label{lem:expectation-phi}
If $s$ is even, then
\[
\E[\phi_i(w)]=\E\left[\log a_i^{\top}\left(\sum_j w_ja_ja_j^{\top}\right)^{-1}a_i\right]\leq \frac 1 T\log \frac m n+\frac 2 {s}.
\]
where the randomness is taken over the sketching matrices $\{S^{(k)}\}_{k=1}^{T-1}$.
\end{lemma}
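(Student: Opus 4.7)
The plan is to start from Lemma \ref{lem:jl-jenson}, take expectations over the sketching matrices $S^{(1)},\ldots,S^{(T-1)}$, and reduce the problem to bounding $\E[\log(\hat w_i^{(k)}/w_i^{(k)})]$ at a single iteration. Taking expectations of Lemma \ref{lem:jl-jenson} gives
\[
\E[\phi_i(w)] \leq \frac{1}{T}\log\frac{m}{n} + \frac{1}{T}\sum_{k=1}^{T} \E\left[\log\frac{\hat w_i^{(k)}}{w_i^{(k)}}\right].
\]
The $k=1$ term vanishes because $\hat w^{(1)}=w^{(1)}$ by convention, so only $T-1$ summands are nontrivial.

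Next I would identify the conditional distribution driving the random ratios. Set $y := B^{(k-1)}((B^{(k-1)})^\top B^{(k-1)})^{-1}\sqrt{w_i^{(k-1)}}\,a_i$. Conditionally on $w^{(k-1)}$, the vector $y$ is deterministic with $\|y\|_2^2 = \hat w_i^{(k)}$, while $S^{(k-1)}$ is an independent matrix with i.i.d.\ $N(0,1)$ entries, so $\|S^{(k-1)} y\|_2^2/\|y\|_2^2 \sim \chi^2(s)$. Consequently
\[
\frac{\hat w_i^{(k)}}{w_i^{(k)}}\;\Big|\;w^{(k-1)} \;\sim\; \frac{s}{X}, \qquad X \sim \chi^2(s),
\]
and the tower property yields $\E[\log(\hat w_i^{(k)}/w_i^{(k)})] = \E[\log(s/X)]$ for every $k \geq 2$.

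The core technical step is then to show $\E[\log(s/X)] \leq 2/s$. I would compute the negative moments directly from the $\chi^2$ density, giving $\E[X^{-t}] = 2^{-t}\Gamma(s/2-t)/\Gamma(s/2)$ for $0 < t < s/2$, hence $\E[(s/X)^t] = (s/2)^t\,\Gamma(s/2-t)/\Gamma(s/2)$. Applying Lemma \ref{lem:gamma-function} with $x = s/2 - t$ and $y = t$ (valid whenever $0 < t \leq 1$ and $t < s/2$) produces
\[
\frac{\Gamma(s/2-t)}{\Gamma(s/2)} \leq \frac{(s/2)^{1-t}}{s/2-t}, \qquad \text{so}\qquad \E[(s/X)^t] \leq \frac{1}{1-2t/s}.
\]
Combining with the elementary inequality $\log u \leq (u^t-1)/t$ (itself immediate from $e^x \geq 1+x$) yields
\[
\E[\log(s/X)] \leq \frac{\E[(s/X)^t] - 1}{t} \leq \frac{2/s}{1-2t/s}
\]
for every admissible $t>0$, and letting $t \to 0^+$ collapses the right-hand side to $2/s$.

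Assembling the pieces, $\frac{1}{T}\sum_{k=1}^T \E[\log(\hat w_i^{(k)}/w_i^{(k)})] \leq \frac{T-1}{T}\cdot\frac{2}{s} \leq \frac{2}{s}$, which combined with the deterministic $\frac{1}{T}\log(m/n)$ term yields the claim. The main obstacle is the sharp bound $\E[\log(s/X)] \leq 2/s$: the naive inequality $\log u \leq u-1$ only gives $\E[\log(s/X)] \leq 2/(s-2)$, which is too loose, so the MGF-plus-gamma-ratio trick together with the precise bound of Lemma \ref{lem:gamma-function} is essential. The parity of $s$ does not seem to be used anywhere in this argument, and is presumably retained purely for compatibility with the algorithm's parameter choice.
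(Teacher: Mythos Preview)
Your proof is correct but follows a genuinely different route from the paper's. Both arguments start from Lemma~\ref{lem:jl-jenson}, identify the conditional law $\hat w_i^{(k)}/w_i^{(k)}\mid w^{(k-1)}\sim s/\chi^2(s)$, and reduce to bounding $\E_{X\sim\chi^2(s)}[\log(s/X)]$. From there the paper evaluates $\E[\log X]$ explicitly via a tabulated integral (Gradshteyn--Ryzhik 4.352-2), obtaining $\E[\log X]=H_{s/2-1}-\gamma+\log 2$ when $s$ is even, and then uses the harmonic-number lower bound $H_k\ge\log k+\gamma$ (applied as $H_{s/2-1}\ge\log(s/2)+\gamma-2/s$) to conclude $\E[\log(s/X)]\le 2/s$. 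The parity hypothesis is there precisely so that $s/2$ is an integer and the digamma value is a finite harmonic sum.

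Your argument instead bounds the fractional moment $\E[(s/X)^t]=(s/2)^t\,\Gamma(s/2-t)/\Gamma(s/2)$, controls the Gamma ratio with Lemma~\ref{lem:gamma-function}, uses $\log u\le(u^t-1)/t$, and lets $t\to 0^+$. This is exactly the mechanism the paper deploys in the \emph{next} lemma (Lemma~\ref{lem:expectation-sigma-moment}) for the higher moments of $\sigma_i$, so you have effectively recycled that computation with the exponent driven to zero. The payoff is that your version never needs $s$ even and avoids the external integral table; the paper's version is more direct (no limiting argument) once the digamma identity is quoted. Your remark that the cruder bound $\log u\le u-1$ yields only $2/(s-2)$ is accurate and explains why the extra $t$-parameter is needed to hit the stated $2/s$ exactly.
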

\begin{proof}
Recall the update rule
\[
 w_{i}^{(k+1)}=\frac 1 s\cdot \|S^{(k)}B^{(k)}((B^{(k)})^{\top}B^{(k)})^{-1}(\sqrt{w_i^{(k)}}a_i)\|_2^2.
\]
Let $y^{(k)}_i=B^{(k-1)}((B^{(k-1)})^{\top}B^{(k-1)})^{-1}(\sqrt{w_i^{(k-1)}}a_i)$ be a vector of size $m$.
Then $\hat w^{(k)}_i=\|y^{(k)}_i\|_2^2$,
and $w^{(k)}_i=\frac 1 s \|S^{(k)}y^{(k)}_i\|_2^2$,
 where each entry of $S^{(k)}$ is chosen i.i.d from $N(0,1)$.

Fix $y^{(k)}_i$.
Let us consider the distribution of $w^{(k)}_i$.
We first consider 1 coordinate of $S^{(k)}y^{(k)}_i$,
which is  
$
(S^{(k)}y^{(k)}_i)_j=\sum_{t=1}^m S^{(k)}_{jt}(y^{(k)}_i)_t
$.
Since each $S^{(k)}_{jt}$ is chosen from $N(0,1)$,
$S^{(k)}_{jt}(y^{(k)}_i)_t$ follows the distribution $N(0,(y^{(k)}_i)_t^2)$,
and $(S^{(k)}y^{(k)}_i)_j$ follows the distribution $N(0,\sum_{t=1}^m(y^{(k)}_i)_t^2)=N(0,\|y^{(k)}_i\|_2^2)=\|y^{(k)}_i\|_2\cdot N(0,1)$.
Hence $w^{(k)}_i$ follows the distribution of $\frac 1 s\|y^{(k)}_i\|_2^2\cdot\chi^2(s)$
where $\chi^2(s)$ is $\chi^2$-distribution with $s$ degree of freedom.

Hence if we only consider the randomness of matrix $S^{(k)}$,
then we have
\begin{equation*}
\E_S\left[\log\frac{\hat w^{(k)}_i}{w^{(k)}_i}\right]=\E_{z\sim \chi^2(s)}\left[\log \frac{\|y^{(k)}_i\|_2^2}{\frac 1 s\|y^{(k)}_i\|_2^2\cdot z}\right]=\E_{z\sim \chi^2(s)}[\log \frac{s}{z}].
\end{equation*}
Hence by the pdf of the $\chi^2$-distribution, 
assuming $s$ is even,
we have that
\[
\E_{z\sim \chi^2(s)}[\log {z}]=\int_{0}^{\infty} \frac 1{2^{s/2}\Gamma(s/2)}x^{s/2-1}e^{-x/2}\log x\de x=\sum_{i=1}^{s/2-1}\frac 1 i-\gamma+\log 2.
\]
The last equation use 4.352-2 of the 5th edition of \cite{gradshteyn2014table}, and $\gamma$ is the Euler constant.
Hence
\[
\E_{z\sim \chi^2(s)}[\log \frac{s}{z}]=\log s-\left(\sum_{i=1}^{s/2-1}\frac 1 i-\gamma+\log 2\right)\leq \log s-(\log \frac s 2+\gamma-\gamma+\log 2-\frac 2 s)=\frac 2 s.
\]
by the fact that $\sum_{i=1}^k \frac 1 i\geq \log k+\gamma$.

Therefore we have that
\[
\E[\phi_i(w)]\leq \frac 1 T\log \frac m n+\frac 2 {s}.
\]
\end{proof}

Since $\phi_i(w)=\log \sigma_i(w)$,
Lemma \ref{lem:expectation-phi} already provides some concentration results on $\sigma_i(w)$.
We can also prove stronger type of concentration by bounding the moments of $\sigma_i(w)$ directly.
\begin{lemma}[Moments of $\sigma_i$]\label{lem:expectation-sigma-moment}
For $\alpha>0$,
if $\frac s 2>\frac {\alpha} { T}$,
then
\[
\E[\sigma_i(w)^{\alpha}]=\E\left[(a_i^{\top}(\sum_j w_ja_ja_j^{\top})^{-1}a_i)^{\alpha}\right]\leq (\frac m n)^{\frac {\alpha} T}\cdot (1+\frac {2\alpha} {sT-2{\alpha}})^T.
\]
\end{lemma}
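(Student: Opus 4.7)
The plan is to start from the telescoping inequality in Lemma \ref{lem:jl-jenson}. Exponentiating and raising to the $\alpha$-th power immediately gives the pointwise bound
\[
\sigma_i(w)^{\alpha} \leq \left(\frac{m}{n}\right)^{\alpha/T} \prod_{k=1}^{T}\left(\frac{\hat w^{(k)}_i}{w^{(k)}_i}\right)^{\alpha/T},
\]
so the lemma reduces to controlling the moment $\E\bigl[\prod_{k=1}^{T}(\hat w^{(k)}_i / w^{(k)}_i)^{\alpha/T}\bigr]$. By the convention $\hat w^{(1)}_i = w^{(1)}_i$, the $k=1$ factor is identically $1$ and only the terms $k=2,\ldots,T$ carry randomness.

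Next I would peel off the product one factor at a time via the tower property. Let $\mathcal{F}_j$ denote the $\sigma$-algebra generated by the first $j$ sketching matrices. For each $k\geq 2$, the vector $y^{(k)}_i$ and the scalar $\hat w^{(k)}_i$ from the proof of Lemma \ref{lem:expectation-phi} are measurable with respect to an earlier $\mathcal{F}_j$, while $w^{(k)}_i$ is produced from $y^{(k)}_i$ by an independent fresh sketch; as in that proof, conditionally on the past the ratio $\hat w^{(k)}_i / w^{(k)}_i$ has the distribution of $s/z$ with $z\sim\chi^2(s)$ independent of everything else. Consequently
\[
\E\!\left[\left(\hat w^{(k)}_i/w^{(k)}_i\right)^{\alpha/T} \mid \mathcal{F}_{j}\right] = \E_{z\sim\chi^2(s)}\!\left[(s/z)^{\alpha/T}\right] =: c,
\]
a deterministic constant, and iterating the conditioning from $k=T$ downwards collapses the product to $c^{T-1}\leq c^{T}$.

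The analytic core is then to evaluate $c$. Plugging in the $\chi^2(s)$ density and substituting $u=z/2$ yields
\[
c = \left(\tfrac{s}{2}\right)^{\alpha/T}\frac{\Gamma(s/2-\alpha/T)}{\Gamma(s/2)},
\]
which is finite precisely under the stated hypothesis $s/2>\alpha/T$. Applying Lemma \ref{lem:gamma-function} with $x=s/2-\alpha/T$ and $y=\alpha/T$ gives $\Gamma(s/2-\alpha/T)/\Gamma(s/2) \leq (s/2)^{1-\alpha/T}/(s/2-\alpha/T)$, from which a short simplification yields $c \leq \frac{s/2}{s/2-\alpha/T} = 1 + \frac{2\alpha}{sT-2\alpha}$. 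Multiplying by the $(m/n)^{\alpha/T}$ prefactor finishes the proof.

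The main obstacle I foresee is the filtration bookkeeping: each $\hat w^{(k)}_i$ carries information from all prior sketches, so the identical distribution of the conditional ratios must be extracted via iterated conditional expectations rather than by (false) outright independence across $k$. A secondary technicality is that the upper half of Lemma \ref{lem:gamma-function} is stated only for $y\in[0,1]$, so the argument as written is cleanest when $\alpha\leq T$; for larger exponents one would reduce to this case by repeatedly applying $\Gamma(x+1)=x\Gamma(x)$, but since this moment bound will ultimately be fed into a Markov inequality with small $\alpha$, this restriction is not binding in practice.
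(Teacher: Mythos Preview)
Your proposal is correct and follows essentially the same route as the paper: start from Lemma~\ref{lem:jl-jenson}, exponentiate, reduce each factor conditionally to the $(s/z)^{\alpha/T}$ moment with $z\sim\chi^2(s)$, evaluate via the Gamma integral, and bound the Gamma ratio with Lemma~\ref{lem:gamma-function}. If anything, your filtration bookkeeping is more careful than the paper's---the paper simply invokes ``each $S^{(k)}$ is independent'' and multiplies, whereas you correctly note that the factors themselves are not independent and that one must peel them off via iterated conditional expectations; you also make explicit the implicit constraint $\alpha/T\le 1$ needed to apply Lemma~\ref{lem:gamma-function}, which the paper leaves unspoken but which holds in the eventual parameter choice.
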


\begin{proof}
By Lemma \ref{lem:jl-jenson} we have
\[
\sigma_i(w)^{\alpha}\leq (\frac m n)^{\frac {\alpha} T}\cdot \prod_{k=1}^{T}(\frac{\hat w^{(k)}}{w^{(k)}})^{\frac {\alpha} T}.
\]
Fix $k$ and $\hat w^{(k)}$.
Similarly as proof of Lemma \ref{lem:expectation-phi},
for the moment let us only consider the randomness of $S^{(k)}$,
then we have
\[
\begin{split}
\E_{S^{(k)}}\left[(\frac{\hat w^{(k)}}{w^{(k)}})^{\frac{\alpha} T}\right]=\E_{z\sim \chi^2(s)}\left[(\frac{s}{z})^{\frac {\alpha} T}\right].
\end{split}
\]
Hence we have
\begin{align*}
\E_{S^{(k)}}\left[(\frac{\hat w^{(k)}}{w^{(k)}})^{\frac {\alpha} T}\right]&=\int_{0}^{\infty} (\frac s x)^{\frac {\alpha} T}\cdot \frac 1{2^{s/2}\Gamma(s/2)}x^{s/2-1}e^{-x/2} \de x\\
&=\frac{s^{\frac {\alpha} T}}{2^{s/2}\Gamma(s/2)}\int_{0}^{\infty}x^{\frac s 2-\frac {\alpha} T-1}e^{-x/2} \de x\\
&=\frac{s^{\frac {\alpha} T}}{2^{s/2}\Gamma(s/2)}\cdot \frac{\Gamma(s/2-\alpha/T)}{(1/ 2 )^{s/2-\alpha/T}}\\
&=(s/2)^{\frac {\alpha} T}\cdot \frac{\Gamma(s/2-\alpha/T)}{s/2}.
\end{align*}
where the third line uses 3.381-4 in \cite{gradshteyn2014table} and the condition that $\frac s 2>\frac {\alpha} T$.

By Lemma \ref{lem:gamma-function},
\[
\frac{\Gamma(\frac s 2)}{\Gamma(\frac s 2-\frac {\alpha} T)}\geq \frac{(\frac s 2-\frac {\alpha} T)}{(\frac s 2)^{1-\frac {\alpha} T}}~,
\]
which gives us
\[
\E_{S^{(k)}}\left[(\frac{\hat w^{(k)}}{w^{(k)}})^{\frac {\alpha} T}\right]\leq (\frac s 2)^{\frac {\alpha} T}\cdot \frac{(\frac s 2)^{1-\frac {\alpha} T}}{(\frac s 2-\frac {\alpha} T)}=\frac {\frac s 2}{\frac s 2-\frac {\alpha} T}.
\]
Because each $S^{(k)}$ matrix is independent to each other,
we have
\[
\E[\sigma_i(w)^{\alpha}]\leq (\frac m n)^{\frac {\alpha} T}\cdot (\frac {\frac s 2}{\frac s 2-\frac {\alpha} T})^T=(\frac m n)^{\frac {\alpha} T}\cdot (1+\frac {2{\alpha}} {sT-2{\alpha}})^T.
\]
\end{proof}

Now we are ready to prove Theorem \ref{thm:good-approx}.
\begin{proof}[Proof of Theorem \ref{thm:good-approx}]
Set $\alpha=\frac{3}{\epsilon}\log \frac {m}{\delta}$.
We can verify that
\begin{equation}\label{eq:alpha-size}
\alpha \geq \frac{\log ({m}/{\delta})}{\log \frac{1+\epsilon}{1+\epsilon/4}}.
\end{equation}
Notice that in this setting,
we have $sT\geq 4\alpha$.
Therefore,
for $i\in [m]$,
by Markov's inequality,
we have
\begin{align*}
\Pr[\sigma_i(w)\geq 1+\epsilon]
&=\Pr[\sigma_i(w)^{\alpha}\geq (1+\epsilon)^{\alpha}]\\
&\leq \frac{\E[\sigma_i(w)^{\alpha}]}{(1+\epsilon)^{\alpha}}\\
&\leq \frac{(\frac m n)^{\frac {\alpha} T}\cdot (1+\frac {2{\alpha}} {sT-2{\alpha}})^T}{(1+\epsilon)^{\alpha}}\quad\quad\quad &\text{by Lemma \ref{lem:expectation-sigma-moment}}\\
&\leq \frac{(\frac m n)^{\frac {\alpha} T}\cdot (1+\frac {2{\alpha}} {sT/2})^T}{(1+\epsilon)^{\alpha}}\quad\quad\quad &\text{because $sT\geq 4\alpha$}\\
&\leq \frac{(\frac m n)^{\frac {\alpha} T} e^{\frac {4\alpha} s}}{(1+\epsilon)^{\alpha}} \quad\quad\quad &\text{Here we use $1+x\leq e^x$}\\
\end{align*}
With our choice of $s,T$,
we can check that for sufficiently large $m,n$,
\[
(\frac m n)^{\frac {1} T}=(\frac m n)^{\frac{{\epsilon}/{10}}{\log  (m/\delta)}}\leq 1+{\epsilon}/{10},
\]
and
\[
e^{\frac {4} s}=e^{\frac {\epsilon}{20}}\leq 1+{\epsilon}/{10}.
\]
Hence
\[
\Pr[\sigma_i(w)\geq 1+\epsilon]\leq \left(\frac{(1+{\epsilon}/{10})^2}{1+\epsilon}\right)^{\alpha}\leq \left(\frac{1+{\epsilon} /4}{1+\epsilon}\right)^{\alpha}.
\]
Then by \eqref{eq:alpha-size},
we have
\[
\Pr[\sigma_i(w)\geq 1+\epsilon]\leq\frac \delta m.
\]
By union bound,
we have that 
\[
\Pr[\exists i\in[m], \sigma_i(w)\geq 1+\epsilon]\leq \delta.
\]

Then let us prove the second part in Theorem \ref{thm:good-approx}.
Fix $k$.
Recall that $B^{(k)}=\sqrt{W^{(k)}}A$.
Let $D^{(k)}$ be defined as \begin{align*}D^{(k)}:=B^{(k)}((B^{(k)})^{\top}B^{(k)})^{-1}(B^{(k)})^{\top},\end{align*}
then we can check that $D^{(k)}$ is an orthogonal projection matrix,
because
\[
(D^{(k)})^2=\left(B^{(k)}((B^{(k)})^{\top}B^{(k)})^{-1}(B^{(k)})^{\top}\right)\cdot\left(B^{(k)}((B^{(k)})^{\top}B^{(k)})^{-1}(B^{(k)})^{\top}\right)=D^{(k)}.
\]
Since $\mathsf{rank}(D^{(k)})=n$,
we can diagonalize $D^{(k)}$ as $D^{(k)}=\Lambda^{-1} E_n \Lambda$ where $\Lambda$ is an $m\times m$ orthogonal matrix,
and $E_n\in \R^{m\times m}$ is a diagonal matrix where the first $n$ diagonal entries are 1 and all the other entries are 0.
So we can rewrite the update rule as
\[
\begin{split}
w_{i}^{(k+1)}&=\frac 1 s\cdot \|S^{(k)}B^{(k)}((B^{(k)})^{\top}B^{(k)})^{-1}(\sqrt{w_i^{(k)}}a_i)\|_2^2\\
&=\frac 1 s\cdot\left((S^{(k)}D^{(k)})^{\top}(S^{(k)}D^{(k)})\right)_{ii}.
\end{split}
\]
Therefore
\begin{align*}
\sum_{i=1}^m w_{i}^{(k+1)}
=&\frac 1 s\cdot\sum_{i=1}^m\left((S^{(k)}D^{(k)})^{\top}(S^{(k)}D^{(k)})\right)_{ii}\\
=&\frac 1 s\cdot \Tr\left[((S^{(k)}D^{(k)})^{\top}(S^{(k)}D^{(k)})\right]\\
=&\frac 1 s\cdot \Tr\left[(D^{(k)})^{\top}(S^{(k)})^{\top}S^{(k)}D^{(k)}\right]\\
=&\frac 1 s\cdot \Tr\left[(S^{(k)})^{\top}S^{(k)}(D^{(k)})^2\right]\quad\quad\quad &\text{$D^{(k)}$ is symmetric}\\
=&\frac 1 s\cdot \Tr\left[(S^{(k)})^{\top}S^{(k)}D^{(k)}\right]\quad\quad\quad &\text{$(D^{(k)})^2=D^{(k)}$ }\\
=&\frac 1 s\cdot \Tr\left[S^{(k)}\Lambda^{-1} E_n \Lambda(S^{(k)})^{\top}\right].\quad\quad\quad &\text{diagonalization of $D^{(k)}$ }\\
\end{align*}

Let $\hat S^{(k)}=S^{(k)}\Lambda^{-1}$.
Here $\Lambda$ depends on previous randomness.
Notice that Gaussian distribution is invariant under orthogonal transform,
and $S^{(k)}$ is independent to previous $S$ matrices.
We conclude that $\hat S^{(k)}$ also has i.i.d entries that follows the distribution $N(0,1)$,
and $\hat S^{(k)}$ is independent to previous randomness.

So we have
\begin{align*}
\sum_{i=1}^m w_{i}^{(k+1)}=&\frac 1 s\cdot \Tr[\hat S^{(k)} E_n (\hat S^{(k)})^{\top}]
=\frac 1 s \sum_{p=1}^s\sum_{q=1}^n  (\hat S^{(k)})_{pq}^2.
\end{align*}
Namely, the distribution of $\sum_{i=1}^m w_{i}^{(k+1)}$ is $\frac 1 s \chi^2(ns)$.
Because for different $k$,
the randomness are independent,
we have $\sum_{i=1}^m w_i=\frac 1 T (\sum_{k=1}^T\sum_{i=1}^m w_i^{(k)})$ follows the distribution $\frac 1 {sT}\chi^2(nsT)$.
So we can set $t=\frac 1 {16}\epsilon^2\cdot nsT=\Theta(n\log \frac m \delta)$ in Lemma \ref{lem:chi-square-tail} to see that for sufficiently large $m,n$,

\begin{align*}
\Pr\left[\sum_{i=1}^m w_{i}^{(k+1)}\geq (1+\epsilon)n\right]=&\Pr_{z\sim \frac 1 {sT}\chi^2(nsT)}\left[z\geq (1+\epsilon)n\right]&\text{set $z=\sum_{i=1}^m w_{i}^{(k+1)}$}\\
=&\Pr_{z\sim \chi^2(nsT)}\left[z\geq (1+\epsilon)nsT\right]&\text{rescale $z$}\\
=&\Pr_{z\sim \chi^2(nsT)}\left[z-nsT\geq \epsilon\cdot nsT\right]\\
\leq &\Pr_{z\sim \chi^2(nsT)}\left[z-nsT\geq 2t+2\sqrt{nsT\cdot t}\right]\\
\leq & e^{-t} \leq \delta. &\text{by Lemma \ref{lem:chi-square-tail}}
\end{align*}
\end{proof}

\subsection{Runtime analysis}
We now analyze the running time of Algorithm \ref{alg:faster}.
The main loop is executed $T$ times,
and inside each loop,
we can first use $O(s\cdot \nnz(B^{(k)}))$ time to compute $S^{(k)}B^{(k)}$,
then solve $s$ linear systems to compute $F^{(k)}:=S^{(k)}B^{(k)}((B^{(k)})^{\top}B^{(k)})^{-1}$. 
Finally we can compute all of $\|F^{(k)}\cdot (w_i^{k})^{1/2}\cdot a_i\|_2^2$ in $O(s\cdot \nnz(A))$ time by first computing matrix $S^{(k)}\cdot \sqrt{W^{(k)}}\cdot A^{\top}$.
Recall that $B^{(k)}=\sqrt{W^{(k)}}A$,
so $\nnz(B^{(k)})\leq \nnz(A)$.
Notice that it takes at least $\nnz(A)$ time to solve the linear system $A^\top W^{(x)}Ax=b$.
Together with Theorem \ref{thm:algorithm-performance},
this gives us 
\begin{theorem}[Restatement of Theorem \ref{thm:short-performance-jl}]
For all $\epsilon,\delta\in (0,1)$,
we can find a $(1+\epsilon)$-approximation of John Ellipsoid inside a symmetric  polytope within $O\Big(\frac {1} {\epsilon}\log \frac m {\delta}\Big)$ iterations with probability at least $1-\delta$.
Moreover,
each iteration involves solving $O(\frac 1 \epsilon)$ linear systems of the form $A^{\top}WAx=b$ for some diagonal matrix $W$.
\end{theorem}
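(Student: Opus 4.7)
The plan is to combine the approximation guarantee that has already been established in Theorem~\ref{thm:algorithm-performance} with a per-iteration cost accounting that exposes how the linear algebra reduces to solving linear systems with the matrix $A^\top W A$. Since Theorem~\ref{thm:algorithm-performance} already gives a $(1+\eps)^2$-approximation using $T = O(\eps^{-1}\log(m/\delta))$ outer iterations and sketching dimension $s = O(\eps^{-1})$, the only remaining task is to dissect what the work inside the $k$-th iteration really is.

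First I would unpack the inner loop of Algorithm~\ref{alg:faster}. The nontrivial object is the vector $F^{(k)} a_i$ where $F^{(k)} := S^{(k)} B^{(k)} ((B^{(k)})^\top B^{(k)})^{-1}$, and the key identity is $(B^{(k)})^\top B^{(k)} = A^\top W^{(k)} A$, so that computing $F^{(k)}$ amounts to solving $s$ linear systems of the form $(A^\top W^{(k)} A)\, x = b$ (one for each row of $S^{(k)} B^{(k)}$, which serves as the right-hand side). Once $F^{(k)}$ is available, the update $w_i^{(k+1)} = \frac{1}{s}\|F^{(k)}\sqrt{w_i^{(k)}}\, a_i\|_2^2$ is obtained by a single sparse matrix-vector product against the sketched matrix, which can be batched across all $i$ in time $O(s\cdot\nnz(A))$; this dominated cost is subsumed by the cost of actually invoking a linear system solve.

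Next I would observe that forming $S^{(k)} B^{(k)}$ to build the right-hand sides takes $O(s\cdot\nnz(B^{(k)})) = O(s\cdot\nnz(A))$ time (since $B^{(k)} = \sqrt{W^{(k)}} A$ has the same sparsity pattern as $A$), which is again dominated by or comparable to the cost of a single linear solve on $A^\top W^{(k)} A$. Therefore each outer iteration is accounted for by $s = O(\eps^{-1})$ linear-system solves with $A^\top W^{(k)} A$, plus lower order sparse arithmetic. Replacing $\eps$ by $\eps/3$ (say) absorbs the $(1+\eps)^2$ versus $(1+\eps)$ discrepancy from Theorem~\ref{thm:algorithm-performance} while keeping $T$ and $s$ in the same asymptotic class, and a union bound of the two $\delta$-probability events (the per-coordinate leverage score bound and the $\sum_i w_i$ tail bound) yields the claimed $1-\delta$ after rescaling $\delta$ by a constant.

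There is no real obstacle here: the correctness and sample complexity have been extracted in the previous subsection, and the present statement is essentially a bookkeeping step. The only subtle point worth highlighting in the write-up is that the algorithm is structured specifically so that the only access to $A$ is through solves of the form $(A^\top W A)x = b$, which is what licenses the user to plug in a problem-specific solver (Laplacian, sketch-and-solve for tall $A$, sparse Cholesky, etc.) as advertised in the introduction.
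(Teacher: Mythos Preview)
Your proposal is correct and mirrors the paper's own argument essentially line for line: invoke Theorem~\ref{thm:algorithm-performance} for the approximation guarantee with $T=O(\eps^{-1}\log(m/\delta))$ and $s=O(\eps^{-1})$, then observe that each iteration forms $S^{(k)}B^{(k)}$ in $O(s\cdot\nnz(A))$ time, solves $s$ linear systems with $(B^{(k)})^\top B^{(k)}=A^\top W^{(k)}A$ to obtain $F^{(k)}$, and finishes with another $O(s\cdot\nnz(A))$ sparse product, all of which is dominated by the $s$ solves. Your explicit remarks about replacing $\eps$ by $\eps/3$ and rescaling $\delta$ to pass from $(1+\eps)^2$ and $1-2\delta$ to the stated constants are points the paper leaves implicit, but the substance is identical.
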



\newpage
\section{Matlab Code}\label{sec:code}
\lstset{language=Matlab,%
	style = Matlab-editor,
	basicstyle=\ttfamily\footnotesize,
    breaklines=false,%
    morekeywords={matlab2tikz},
    keywordstyle=\color{blue},%
    morekeywords=[2]{1}, keywordstyle=[2]{\color{black}},
    identifierstyle=\color{black},%
    stringstyle=\color{mylilas},
    commentstyle=\color{mygreen},%
    showstringspaces=false,
    numbers=none,%
    numberstyle={\tiny \color{black}},
    numbersep=9pt, 
}


\lstinputlisting[language=Matlab]{FixedPoint.m}

\end{document}